\documentclass[11pt]{article}

\usepackage[paperwidth=199.8mm,paperheight=297mm,centering,hmargin=20mm,vmargin=25mm]{geometry}

\usepackage{cite}

%*******************************************************************************
\usepackage[utf8]{inputenc}
\usepackage{authblk}
\usepackage{amsmath, amssymb, amsthm, bbm, mathtools, nccmath, mathrsfs}
\usepackage{url, hyperref}
\hypersetup{
    colorlinks=true,
    citecolor=blue,
    linkcolor=blue,
    filecolor=blue,
    urlcolor=blue,
    breaklinks=true}
\usepackage{float, caption, subcaption}
%*******************************************************************************
% Math Environment
\theoremstyle{plain} 
    \newtheorem{theorem}{Theorem}
    \newtheorem{lemma}[theorem]{Lemma}
    \newtheorem*{lemma*}{Lemma}
    \newtheorem{fact}[theorem]{Fact}
    
    \newtheorem{proposition}[theorem]{Proposition}

\theoremstyle{definition}
    \newtheorem{definition}[theorem]{Definition}
\theoremstyle{remark}
    \newtheorem{remark}[theorem]{Remark}
%*******************************************************************************
% Consistency
\renewcommand{\mathbf}{\boldsymbol}

\renewcommand{\varepsilon}{\epsilon}

\renewcommand{\geq}{\geqslant}

\renewcommand{\leq}{\leqslant}
\renewcommand{\tilde}{\widetilde}
%*******************************************************************************
% Math fonts with meanings
\newcommand*{\set}[1]{\mathcal{#1}} % sets
\newcommand*{\sets}[1]{\mathfrak{#1}} % sets
 % Function names, labels
\newcommand*{\rv}[1]{\mathsf{#1}} % Random variables
 % A vector of random variables
\newcommand*{\sys}[1]{\mathsf{#1}} % Quantum system
\newcommand*{\syss}[1]{\mathbf{\mathsf{#1}}} % A vector of quantum systems
\newcommand*{\channel}[1]{\mathcal{#1}} % A quantum channel
%*******************************************************************************
% Uni-Operators / Symbols / Functions
 % Differential, the empty \mathop is used to provide the correct spacing before the "d"

 % Expectation

 % Variance
\newcommand*{\tensor}{\otimes}

\DeclareMathOperator{\tr}{tr}

%*******************************************************************************
% New Symbols with specific meanings
\newcommand{\hilbert}{\mathcal{H}}
 % uniform distribution

\newcommand{\reals}{\mathbb{R}}

\newcommand{\naturals}{\mathbb{N}}
\newcommand{\LinOp}{\mathscr{L}}
\newcommand{\HermOp}{\mathscr{H}}
\newcommand{\PSD}{\HermOp_{\scalebox{0.7}{+}}}
\newcommand{\PD}{\HermOp_{\scalebox{0.7}{++}}}
\newcommand{\density}{\mathscr{D}}
\newcommand{\herm}{\dagger}
%*******************************************************************************
% Relationship operators
\newcommand{\defeq}{\coloneqq}
\newcommand{\defas}{\eqqcolon}

\newcommand{\mle}{\preccurlyeq}
%*******************************************************************************
% Paired bracket-like command
\DeclarePairedDelimiterX{\ket}[1]{\lvert}{\rangle}{#1}
\DeclarePairedDelimiterX{\bra}[1]{\langle}{\rvert}{#1}
\DeclarePairedDelimiterX{\spr}[2]{\langle}{\rangle}{#1\delimsize\vert#2}
\DeclarePairedDelimiterX{\proj}[1]{\lvert}{\rvert}{#1\delimsize\rangle\!\delimsize\langle#1}
\DeclarePairedDelimiterX{\floor}[1]{\lfloor}{\rfloor}{#1}
\DeclarePairedDelimiterX{\ceil}[1]{\lceil}{\rceil}{#1}
\DeclarePairedDelimiterX{\abs}[1]{\lvert}{\rvert}{#1}
\DeclarePairedDelimiterX{\norm}[1]{\lVert}{\rVert}{#1}
\DeclarePairedDelimiterX{\size}[1]{\lvert}{\rvert}{#1}
\DeclarePairedDelimiterX{\infdiv}[2]{(}{)}{#1\delimsize\Vert#2}
\DeclarePairedDelimiterX{\infdivc}[3]{(}{)}{#1\delimsize\Vert#2\delimsize\vert#3}
\DeclarePairedDelimiterX{\inner}[2]{\langle}{\rangle}{#1,#2}
%*******************************************************************************
% Specific mathematical object to this paper
\newcommand{\Div}{\mathbb{D}} % General Divergence
\newcommand{\uDiv}{D}% Umegaki divergence
\newcommand{\pDiv}[1]{D_{\Petz, #1}}%  Petz Rényi divergence
\newcommand{\rDiv}[1]{{D}_{\Sand, #1}}% Sandwiched Rényi divergence
\newcommand{\hDiv}[1]{D_{\Hypo, #1}}% Hypothesis testing divergence, see: \newcommand{\Hypo}{{\scriptscriptstyle  \rm H}}
\newcommand{\mDiv}{D_{\scriptscriptstyle  \rm M}}% Measured divergence 
\newcommand{\mrDiv}[1]{D_{{\scriptscriptstyle  \rm M}, #1}}%Measured Rényi divergence
% Min-divergence
% Max-divergence
\newcommand{\supDiv}{\Div^{\uparrow}}

\newcommand{\infDiv}{\Div^{\downarrow}}
\newcommand{\iinfDiv}{\Div^{\downarrow\downarrow}}
\newcommand{\iinfmDiv}{\mDiv^{\downarrow\downarrow}}
\newcommand{\iinfmrDiv}[1]{\mrDiv{#1}^{\downarrow\downarrow}}
% A whole bunch of short text labels
\newcommand{\Meas}{{\scriptscriptstyle \rm M}}

\newcommand{\Renyi}{R\'{e}nyi }

\newcommand{\Sand}{{\scriptscriptstyle  \rm S}}
\newcommand{\Petz}{{\scriptscriptstyle  \rm P}}
\newcommand{\Hypo}{{\scriptscriptstyle  \rm H}}

\newcommand{\new}{\text{\rm new}}

\newcommand{\CPTP}{\text{\rm CPTP}}

\newcommand{\PER}{\text{\rm PERM}}

\newcommand{\CP}{\text{\rm CP}}

\newcommand{\spec}{\text{\rm spec}}
\newcommand{\poly}{\text{\rm poly}}

\newcommand{\minimax}{{\operatorname{mx}}}

%*******************************************************************************
% Align multiple limits
\ExplSyntaxOn
\NewDocumentCommand{\multiadjustlimits}{m}
 {% the argument should be a comma separated list of items of the form
  % <operator>_{<limit>}
  \group_begin:
  \multiadjustlimits_measure:n { #1 }
  \multiadjustlimits_print:n { #1 }
  \group_end:
 }

\tl_new:N  \l__multiadjustlimits_operator_tl
\tl_new:N  \l__multiadjustlimits_limit_tl

\cs_new_protected:Nn \multiadjustlimits_measure:n
 {
  \clist_map_function:nN { #1 } \__multiadjustlimits_measure:n
 }
\cs_new_protected:Nn \__multiadjustlimits_measure:n
 {
  \__multiadjustlimits_measure:NNn #1
 }
\cs_new_protected:Nn \__multiadjustlimits_measure:NNn
 {
  \tl_put_right:Nn \l__multiadjustlimits_operator_tl { #1 }
  \tl_put_right:Nn \l__multiadjustlimits_limit_tl { #3 }
 }

\cs_new_protected:Nn \multiadjustlimits_print:n
 {
  \clist_map_function:nN { #1 } \__multiadjustlimits_print:n
 }
\cs_new_protected:Nn \__multiadjustlimits_print:n
 {
  \__multiadjustlimits_print:NNn #1
 }
\cs_new_protected:Nn \__multiadjustlimits_print:NNn
 {
  \mathop { \vphantom{\l__multiadjustlimits_operator_tl} \mathopen{} #1 }
  \limits
  \sb{ \vphantom{\cramped{\l__multiadjustlimits_limit_tl}} #3 }
 }

\ExplSyntaxOff
%*******************************************************************************
% Abbreviations/Latin
\newcommand\ie{\textit{i.e.}}
\newcommand\eg{\textit{e.g.}}
\newcommand\cf{\textit{cf.}}

\newcommand\aka{a.k.a.~}

%*******************************************************************************

%*******************************************************************************
%*******************************************************************************
% for the shaded environments
\usepackage{framed}
\usepackage[dvipsnames]{xcolor} \usepackage{color}
\definecolor{shadecolor}{rgb}{0.9,0.9,0.9}
%*******************************************************************************
%*******************************************************************************
\begin{document}
%*******************************************************************************
%*******************************************************************************
\title{\Large \bfseries Quantum channel discrimination against jammers}
\author[1]{Kun Fang}
\author[2]{Michael X. Cao}
\affil[1]{\small{School of Data Science, The Chinese University of Hong Kong, Shenzhen,
Guangdong, 518172, China}}
\affil[2]{\small{Institute for Quantum Information, RWTH Aachen University, Aachen, Germany}}
\date{\vspace{-1em}}
\maketitle
%*******************************************************************************
\begin{abstract}
We study the problem of quantum channel discrimination between two channels with an adversary input party (\aka a jammer).
This setup interpolates between the best-case channel discrimination as studied by (Wang~\&~Wilde,~2019) and the worst-case channel discrimination as studied by (Fang,~Fawzi,~\&~Fawzi,~2025), thereby generalizing both frameworks.
To address this problem, we introduce the notion of minimax channel divergence and establish several of its key mathematical properties. 
We prove the Stein's lemma in this new setting, showing that the optimal type-II error exponent in the asymptotic regime under parallel strategies is characterized by the regularized minimax channel divergence.
\end{abstract}
%*******************************************************************************
%*******************************************************************************
\section{Introduction}

Channel discrimination is a fundamental task in both classical and quantum information theory~\cite{hayashi2009discrimination}. 
The goal is to distinguish between two channels—conditional distributions in the classical case, or completely positive trace-preserving (CPTP) maps in the quantum case—by preparing suitable inputs and observing the corresponding outputs. Quantum channel discrimination plays a pivotal role in a variety of quantum information tasks, providing insights into numerous protocols and applications~\cite{dariano_using_2001,chiribella_memory_2008,hayashi_discrimination_2009,harrow_adaptive_2010,wilde2020amortized,Cooney2016,wang2019resource,pirandola_fundamental_2019,WW2019,bergh_parallelization_2024}. These range from foundational studies—such as exploring the quantum advantage of entanglement~\cite{piani2009all,takagi_operational_2019,bae_more_2019,skrzypczyk_robustness_2019}—to practical applications in quantum communication (e.g., estimating channel capacities~\cite{Wang2012,datta_smooth_2013,Wang2019,wang2019converse,fang2021geometric,fang2025towards}), quantum sensing (e.g., quantum reading and quantum illumination~\cite{pirandola_advances_2018,zhuang_ultimate_2020}), and even quantum biology~\cite{spedalieri_detecting_2020,pereira2020quantum}.

Previous studies on quantum channel discrimination have focused exclusively on the conventional ``best-case'' scenario~\cite{acin2001statistical,duan2007entanglement,chiribella_memory_2008,piani2009all,duan2009perfect,bae_more_2019,takagi_operational_2019,skrzypczyk_robustness_2019,fang2020chain,zhuang_ultimate_2020,bavaresco2021strict,debry2023experimental,sugiura2024power}, in which the tester has full control over input states, intermediate operations, and final measurements. More recently, a ``worst-case'' adversarial framework was introduced~\cite{fang2025adversarial}, motivated by applications such as quantum device verification~\cite{zhu2019efficient}, where a state preparation device from an untrusted manufacturer is expected to produce a specific resource state (e.g., a Bell state, magic state, or coherent state), but may instead output arbitrary junk states if faulty or maliciously designed.  In the adversarial setting, the roles of the tester and adversary are fundamentally different, resulting in a competitive game: the tester designs the measurement, while the adversary controls the choice of input states and potentially intermediate updates.

In this paper, we turn to channel discrimination in the presence of quantum-enabled jammers, a new discrimination framework that interpolates the best-case~\cite{wang2019resource} and worst-case~\cite{fang2025adversarial} scenarios.
Our focus is on the asymmetric error exponent (also known as the Stein's exponent) in the asymptotic regime. 
The study of channels with jammers -- \ie, channels subject to adversarial inputs -- dates back to the late 1950s~\cite{blackwell1959capacity, blackwell1960capacities}. 
Early work considered classical adversaries, leading to models such as compound channels~\cite{wolfowitz1961coding} and arbitrarily varying channels (AVCs)~\cite{ahlswede1970note} (see also~\cite{csiszar2011information}). 
These models have since been generalized to the quantum setting, including arbitrarily varying classical-quantum and quantum channels~\cite{ahlswede2007classical, ahlswede2013quantum, berta2017entanglement, boche2017entanglement,Dasgupta2025universal}, where the jammer is effectively classical;
and, most notably, fully quantum arbitrarily varying channels (FQAVCs)~\cite{boche2018fully, belzig2024fully}. 
However, prior work has focused almost exclusively on capacity results (\ie, coding tasks), rather than the more fundamental problem of channel discrimination.

% Motivations
Beyond its theoretical interest, our study is also operationally motivated. 
As already pointed out in~\cite{fang2025adversarial}, the verification of distrusted quantum devices arises naturally in a variety of quantum information tasks, including quantum key distribution~\cite{Bennett1984, mayers1998quantum}, quantum adversarial learning~\cite{lu2020quantum}, quantum interactive proofs~\cite{vidick2016quantum}, and quantum state verification~\cite{zhu2019efficient}. 
Unlike the fully adversarial setting of~\cite{fang2025adversarial}, however, we consider devices that are \emph{partially} distrusted, in the sense that the tester retains some control via the choice of channel input. 
From a technical perspective, this task of discriminating  with adversarial inputs forms a competitive minimax game-theorectic framework and interpolates naturally between the conventional ``best-case'' setup~\cite{wang2019resource} (with a trivial jammer system) and the recent ``worst-case'' (adversarial) setup~\cite{fang2025adversarial} (with a trivial input system), thereby providing a unified generalization of both frameworks (see Figure~\ref{fig:parallel-parallel}).
\begin{figure}
    \centering
    \includegraphics[width=\linewidth]{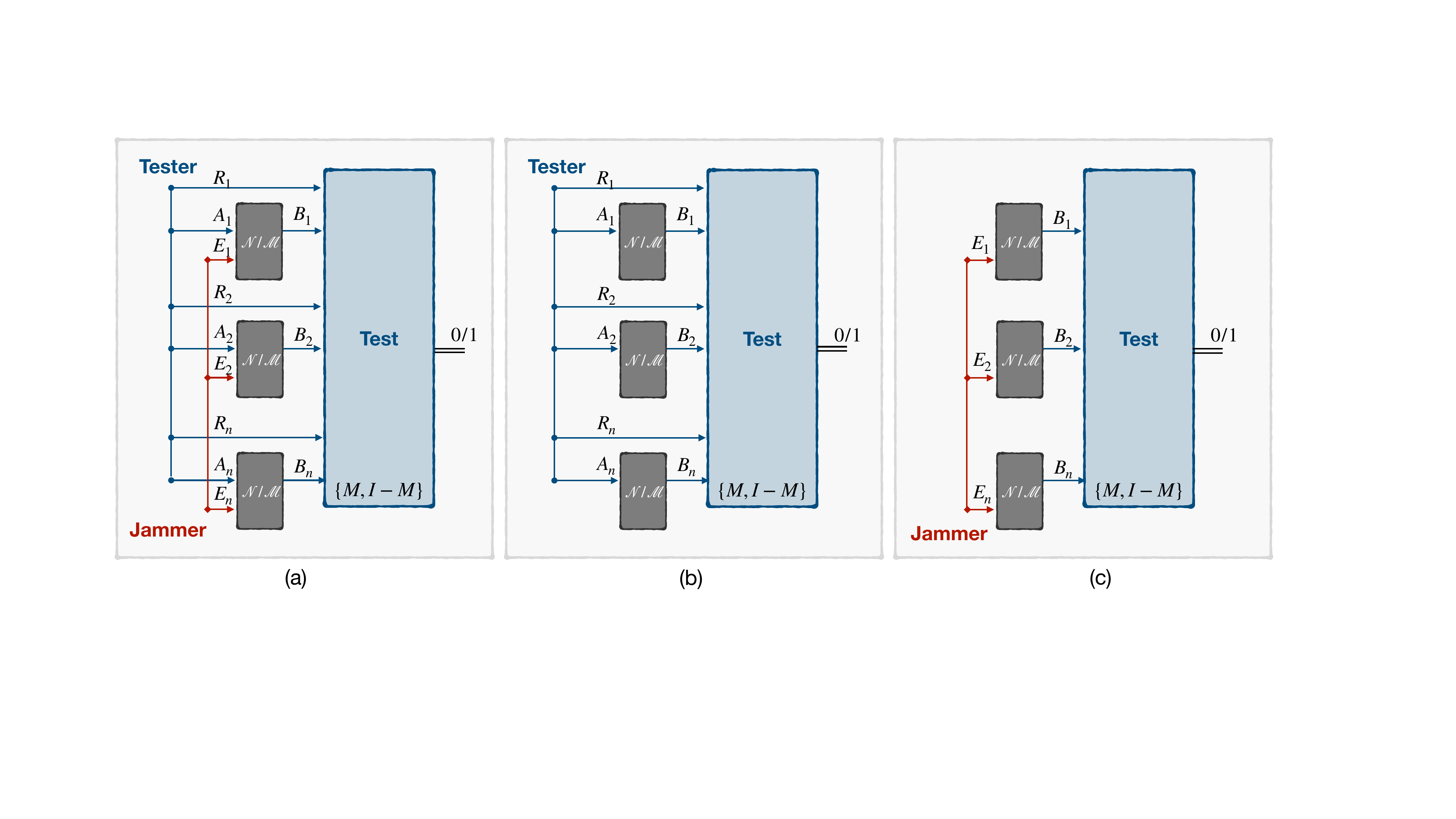}
    \caption{Comparison of different quantum channel discrimination setups via parallel strategies. (a): Tester plays against jammer (this work); (b) The best-case setup with a trivial jammer system~\cite{wang2019resource}; (c) The worst-case setup with a trivial input system~\cite{fang2025adversarial}.}
    \label{fig:parallel-parallel}
\end{figure}

Technically, the aforementioned task of quantum channel discrimination against jammers is formulated as a hypothesis testing problem:
A quantum channel with a jammer is a CPTP with dual input systems, \ie, $\channel{N}_{\sys{AE}\to\sys{B}} \in \CPTP(\sys{AE}:\sys{B})$, where $\sys{A}$ and $\sys{B}$ are the input and the output system, respectively, and $\sys{E}$ is the adversary/jammer system.
Under the one-shot setup, to distinguish between a pair of such channels, say, $\channel{N}_{\sys{AE}\to\sys{B}}$ and $\channel{M}_{\sys{AE}\to\sys{B}} \in \CPTP(\sys{AE}:\sys{B})$, the strategy can be described by a pair of input preparation $\rho_\sys{AR}\in\density(\hilbert_\sys{A}\tensor\hilbert_\sys{R})$ and measurement $0\mle M \mle I_\sys{B}$ on the output system, where $\sys{R}$ is some auxiliary system.
The type-I and type-II error can hence be defined as 
\begin{align}
\label{eq:def:type-1:error}
\alpha\infdivc*{\channel{N}_{\sys{AE}\to\sys{B}}}{\channel{M}_{\sys{AE}\to\sys{B}}}{\rho_\sys{AR}} &\defeq \sup_{\sigma_\sys{E}\in\density(\hilbert_\sys{E})} \tr\big( \channel{N}_{\sys{AE}\to\sys{B}}(\rho_\sys{AR}\tensor\sigma_\sys{E})\cdot (I_\sys{B} - M) \big), \\
\label{eq:def:type-2:error}
\beta\infdivc*{\channel{N}_{\sys{AE}\to\sys{B}}}{\channel{M}_{\sys{AE}\to\sys{B}}}{\rho_\sys{AR}} &\defeq \sup_{\sigma_\sys{E}\in\density(\hilbert_\sys{E})} \tr\big( \channel{M}_{\sys{AE}\to\sys{B}}(\rho_\sys{AR}\tensor\sigma_\sys{E})\cdot M \big),
\end{align}
respectively.
These definitions reflect a worst-case adversary, since the jammer is free to choose any input state on system $\sys{E}$.
For asymmetric hypothesis testing, we want to study the optimal type-II error probability
\begin{equation}\label{eq:def:opt:type-2:error}
\beta^{\sys{E}}_\epsilon\infdiv*{\channel{N}_{\sys{AE}\to\sys{B}}}{\channel{M}_{\sys{AE}\to\sys{B}}} \defeq
\inf_{\substack{0\mle M \mle I_\sys{B}\\ \rho_\sys{AR} \in \density(\hilbert_\sys{A}\tensor\hilbert_\sys{R})}}
\Big\{ \beta\infdivc*{\channel{N}_{\sys{AE}\to\sys{B}}}{\channel{M}_{\sys{AE}\to\sys{B}}}{\rho_\sys{AR}} 
    \Big\vert
    \alpha\infdivc*{\channel{N}_{\sys{AE}\to\sys{B}}}{\channel{M}_{\sys{AE}\to\sys{B}}}{\rho_\sys{AR}} \leq \epsilon \Big\}.
\end{equation}
In particular, we are interested in studying the error exponent,
\begin{equation}\label{eq:def:error:exponent}
\lim_{n\to\infty} -\frac{1}{n}\log{\beta^{\sys{E}}_\epsilon\infdiv*{\channel{N}_{\sys{AE}\to\sys{B}}^{\tensor n}}{\channel{M}_{\sys{AE}\to\sys{B}}^{\tensor n}}} = \ ?
\end{equation}
The above scenario is known as a parallel strategy; in contrast, sequential strategies could depend on intermediate outputs.

% Methodology / key technical novelty / contributions
Towards solving the above problem, we develop the notion of minimax channel divergence together with an array of its interesting mathematical properties.
In particular, we prove the minimax property of the minimax channel divergence (Lemma~\ref{lem: jammer divergence minimax}),  which shows the equivalence of a few variant definitions. We also establish the optimal type-II error with our framework via the minimax channel divergence (Proposition~\ref{prop: beta DH relation jammer}).
We further show that the optimal jamming states can always be chosen permutation-invariant (Lemma~\ref{lem:mimmaxDiv:perm}), a fact that underpins the Stein's lemma we established later. 
We then turn to the measured (Rényi) minimax channel divergence, where we establish its super-additivity (Lemma~\ref{lem:superadditive:mrDiv}) and prove its asymptotic equivalence to the Umegaki minimax channel divergence (Lemma~\ref{lem: DM D jam infty}). 
These results form the technical foundation for our characterization of the Stein's exponent in the presence of quantum jammers (Theorem~\ref{thm:stein's lemma jammer}).

% Structure of the rest of the paper
The rest of the paper is organized as follows.
In Section~\ref{sec: preliminaries}, we introduce necessary notations alongside with useful results on quantum divergences.
In Section~\ref{sec:divergence}, we introduce our notation of quantum minimax channel divergence, and prove several of its important properties.
In Section~\ref{sec:parallel:discrimination}, we apply the newly found tools to the problem of channel discrimination between channels with jammers and present the main theorem.
Section~\ref{sec:conclusion} concludes the paper.

%*******************************************************************************
%*******************************************************************************
\section{Preliminaries} \label{sec: preliminaries}
This section sets up the notation and recalls several divergences along with their main properties.
%To separate known results from new contributions, we present established results as \textbf{Facts} and our own contributions as \textbf{Lemmas}.
%*******************************************************************************
We begin with the following conventions and notation:
\begin{itemize}\setlength{\itemsep}{0pt}\setlength{\topsep}{0pt}
    \item Sets are denoted by calligraphic fonts, \eg, $\set{X}$ reads ``set $\set{X}$''.
    Random variables and quantum systems are denoted in sans serif fonts, \eg, $\rv{X}$ reads ``random variable $\rv{X}$'', and $\sys{A}$ reads ``system $\sys{A}$''.
    \item Vectors are denoted in boldface letters \eg, $\mathbf{x}$, and $\syss{A}$.
    In particular, we use the subscript and superscript to denote the starting and ending indexes of a vector.
    Namely, $\mathbf{x}_1^n\defeq (x_1, \ldots, x_n)$, and $\syss{A}_1^n\defeq (\sys{A}_1,\ldots, \sys{A}_n)$.
    \item Given a quantum system $\sys{A}$, its state space (which is a Hilbert space) is denoted by $\hilbert_\sys{A}$.
    We denote $\LinOp(\hilbert_\sys{A})$, $\HermOp(\hilbert_\sys{A})$, $\PSD(\hilbert_\sys{A})$, and $\PD(\hilbert_\sys{A})$ the set of all linear operators, Hermitian operator, positive semidefinite operators, and  positive definite operators on $\hilbert_\sys{A}$, respectively.
    Furthermore, we denote $\density(\hilbert_\sys{A})$ the set of density operators on $\hilbert_\sys{A}$.
    \item For two PSD operators on a same Hilbert space, \ie, $\rho,\sigma\in\PSD(\hilbert)$, $\rho\ll\sigma$ denotes that the support of $\rho$ is a subspace of the support of $\sigma$.
    \item We denote the set of all completely positive trace preserving (CPTP) maps from system $\sys{A}$ to system $\sys{B}$ by $\CPTP(\sys{A}:\sys{B})$, and the set of all completely positive (CP) maps (not necessary trace preserving) from system $\sys{A}$ to system $\sys{B}$ by $\CP(\sys{A}:\sys{B})$.
\end{itemize}

%*******************************************************************************
The remaining part of this section focuses on the concept of quantum divergence, \ie,
\begin{definition}[Quantum divergence]
    A functional $\Div: \density \times \PSD \to \reals$ is called a quantum divergence if it satisfies the data-processing inequality, \ie, 
	\begin{equation}
	\Div\infdiv*{\mathcal{E}(\rho)}{\mathcal{E}(\sigma)} \leq \Div\infdiv*{\rho}{\sigma}
	\end{equation}
	for any quantum channel, \ie, a CPTP map, $\mathcal{E}$, $\rho \in \density$, and $\sigma\in\PSD$. 
	In particular, a quantum divergence is said to satisfy the \emph{direct-sum property} if 
	\begin{equation}
	\Div\infdiv*{\sum_{x} p_\rv{X}(x)\cdot \proj{x} \tensor \rho^{(x)}}{\sum_{x} p_\rv{X} \proj{x} \tensor \sigma^{(x)}}
    = \sum_{x} p_\rv{X}(x)\cdot \Div\infdiv*{\rho^{(x)}}{\sigma^{(x)}}.
	\end{equation}
	for any random variable $\rv{X}$ and $\{\rho^{(x)}\}_{x},\{\sigma^{(x)}\}_{x}\subset\density$.
\end{definition}
\noindent 
It is noteworthy that the direct-sum property together with the data processing implies the joint convexity of such quantum divergence.
In the following, we will introduce several quantum divergences and their fundamental properties, which will be used throughout this work.

\begin{definition}[Useful quantum divergences]
	For any $\rho\in \density$, $\sigma \in \PSD$, and $\alpha \in (0,1) \cup (1,+\infty)$, there are following definitions:
        \vspace{-.5em}
	\begin{description}\setlength{\itemsep}{0pt}\setlength{\topsep}{0pt}
	    \item [Umegaki divergence~\cite{umegaki1954conditional}]
	          $\uDiv\infdiv*{\rho}{\sigma} \defeq \tr[\rho(\log \rho - \log \sigma)]$ if $\rho\ll\sigma$ and $+\infty$ otherwise.
	    \item [Petz \Renyi divergence~\cite{petz1986quasi}]
	          $\pDiv{\alpha}\infdiv*{\rho}{\sigma} \defeq \frac{1}{\alpha-1}\log\tr\left[\rho^\alpha\sigma^{1-\alpha}\right]$ if $\rho\ll\sigma$, and $+\infty$ otherwise.
	    \item [Sandwiched \Renyi divergence~\cite{muller2013quantum,wilde2014strong}]
	          $\rDiv{\alpha}\infdiv*{\rho}{\sigma} \defeq \frac{1}{\alpha-1}\log\tr\left[\sigma^{\frac{1-\alpha}{2\alpha}}\rho\sigma^{\frac{1-\alpha}{2\alpha}}\right]^\alpha$ if $\rho\ll\sigma$, and $+\infty$ otherwise.
	    \item [Hypothesis testing divergence]
	          For each $\epsilon \in [0,1]$, $\hDiv{\epsilon} \infdiv*{\rho}{\sigma} \defeq -\log \beta_{\epsilon}\infdiv*{\rho}{\sigma}$ where \\ $\beta_\epsilon\infdiv*{\rho}{\sigma} \defeq \min_{0\leq M \leq I} \left\{\tr[\sigma M]: \tr[\rho(I-M)] \leq \epsilon\right\}$.
	    \item [Measured divergence~\cite{donald1986relative,hiai1991proper}]
	          $\mDiv\infdiv*{\rho}{\sigma} \defeq \sup_{(\set{X},M)} D\infdiv*{P_{\rho,\Meas}}{P_{\sigma,\Meas}}$ where the optimization is over finite sets $\set{X}$ and positive operator valued measures $M$ on $\set{X}$ such that $M_x \geq 0$ and $\sum_{x \in \set{X}} M_x = I$, $P_{\rho,\Meas}$ is a measure on $\set{X}$ defined via the relation $P_{\rho,\Meas}(x) = \tr[M_x\rho]$ for any $x \in \set{X}$.  
	    \item [Measured \Renyi divergence~\cite{Berta2017}]
	          Similar to the above, $\mrDiv{\alpha}\infdiv*{\rho}{\sigma} \defeq \sup_{(\set{X},M)} D_{\alpha}\infdiv*{P_{\rho,\Meas}}{P_{\sigma,\Meas}}$ where $D_{\alpha}$ is the classical \Renyi divergence. 
	\end{description}
\end{definition}
%*******************************************************************************
In the following, we collect a list of useful facts from the literature.
\begin{fact}[{Combining \cite[Lemma 5]{cooney2016strong} and~\cite[Eq.~(2)]{wang2012one}}] \label{fact: DH petz sandwiched}
Let $\alpha \in (0,1)$ and $\epsilon\in(0,1)$. For any $\rho\in \density$ and $\sigma \in \PSD$, it holds that
\begin{align}
    \pDiv{\alpha}\infdiv*{\rho}{\sigma} + \frac{\alpha}{\alpha-1} \log \frac{1}{\epsilon}
    \leq \hDiv{\epsilon} \infdiv*{\rho}{\sigma}
    \leq \frac{1}{1-\epsilon}(\uDiv\infdiv*{\rho}{\sigma} + h(\epsilon)),
\end{align}
where $h(\epsilon):= -\epsilon \log \epsilon - (1-\epsilon)\log (1-\epsilon)$ is the binary entropy.
\end{fact}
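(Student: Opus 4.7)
The stated fact bundles two one-shot comparisons between $\hDiv{\epsilon}$ and, respectively, the Umegaki and Petz R\'enyi divergences; I would prove each separately, with both arising from data-processing applied to the binary coarse-graining of an optimal test. Throughout, let $M^\star$ realize $\beta_\epsilon\infdiv*{\rho}{\sigma}$, and set $p\defeq\tr[\rho M^\star]\geq 1-\epsilon$ and $q\defeq\tr[\sigma M^\star]=\beta_\epsilon\infdiv*{\rho}{\sigma}$. For the upper bound $\hDiv{\epsilon}\leq\frac{1}{1-\epsilon}(\uDiv+h(\epsilon))$, applying data-processing of $\uDiv$ through $\{M^\star,I-M^\star\}$ yields
\begin{align}
\uDiv\infdiv*{\rho}{\sigma}\geq p\log\tfrac{p}{q}+(1-p)\log\tfrac{1-p}{1-q}=-h(p)-p\log q-(1-p)\log(1-q).
\end{align}
The last term is non-negative since $1-q\leq 1$, and dropping it gives $\uDiv\geq -h(p)+p\cdot\hDiv{\epsilon}\infdiv*{\rho}{\sigma}$. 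Rearranging and using $p\geq 1-\epsilon$ together with $h(p)\leq h(\epsilon)$ (by symmetry $h(x)=h(1-x)$ and unimodality of $h$ about $\tfrac12$ for the regime $\epsilon\leq\tfrac12$; the complementary regime is handled by monotonicity of $\hDiv{\epsilon}$ in $\epsilon$) produces the claimed bound.

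The lower bound $\pDiv{\alpha}\infdiv*{\rho}{\sigma}+\frac{\alpha}{\alpha-1}\log\frac{1}{\epsilon}\leq\hDiv{\epsilon}\infdiv*{\rho}{\sigma}$ is equivalent, after rearrangement and exponentiation, to the achievability bound $\beta_\epsilon\infdiv*{\rho}{\sigma}\leq \tr[\rho^\alpha\sigma^{1-\alpha}]^{1/(1-\alpha)}\cdot\epsilon^{-\alpha/(1-\alpha)}$, so my plan is to exhibit an explicit test realizing the right-hand side. A natural candidate is the Neyman--Pearson-style projection $M\defeq\{\rho-t\sigma\geq 0\}$ onto the non-negative eigenspace of $\rho-t\sigma$ for a threshold $t>0$ to be optimized. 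From $M\rho M\mge tM\sigma M$ one deduces $\tr[\sigma M]\leq t^{-1}$, while from $(I-M)\rho(I-M)\mle t(I-M)\sigma(I-M)$ one deduces $\tr[\rho(I-M)]\leq t$. These crude bounds can be sharpened, via the operator monotonicity of $x\mapsto x^\alpha$ on $[0,\infty)$ for $\alpha\in(0,1)$ combined with the Araki--Lieb--Thirring trace inequality, into expressions involving $\tr[\rho^\alpha\sigma^{1-\alpha}]$; balancing $t$ so that the type-I bound saturates at $\epsilon$ then yields the stated achievability inequality.

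The principal technical obstacle lies in the non-commutative Markov-type bound required for the lower bound: whereas classically $Q(\{\mathrm dP/\mathrm dQ\geq t\})\leq t^{-\alpha}E_Q[(\mathrm dP/\mathrm dQ)^\alpha]$ follows immediately from Markov's inequality, the quantum analogue must reconcile the spectral projector $\{\rho-t\sigma\geq 0\}$ with the noncommuting operators $\rho^\alpha$ and $\sigma^{1-\alpha}$. The standard remedy combines operator monotonicity of $x^\alpha$ for $\alpha\in(0,1)$ with Araki--Lieb--Thirring, or, equivalently, pinches $\rho$ relative to the eigenbasis of $\sigma$ before invoking classical Markov. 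Since both inequalities are by now standard primitives in quantum hypothesis testing, the cleanest route, which is indeed the one taken in the excerpt, is to cite the careful derivations in~\cite{cooney2016strong,wang2012one} directly.
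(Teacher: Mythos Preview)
The paper does not provide a proof of this statement: it is recorded as a Fact with citations to \cite[Lemma~5]{cooney2016strong} and \cite[Eq.~(2)]{wang2012one}, and you correctly identify at the end of your proposal that citation is the route the paper takes. Your sketch of the underlying arguments is along the right lines, but two points deserve tightening.

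For the upper bound, the step $h(p)\leq h(\epsilon)$ does not follow from your stated argument when $\epsilon>\tfrac12$ (take $\epsilon=0.9$, $p=0.5$: then $h(p)=1>h(\epsilon)$), and ``monotonicity of $\hDiv{\epsilon}$ in $\epsilon$'' does not repair it directly. The clean fix is to force $p=1-\epsilon$ exactly: if the optimal $M^\star$ has $p>1-\epsilon$, replace it by $\lambda M^\star$ with $\lambda=(1-\epsilon)/p\in(0,1]$, which meets the type-I constraint with equality and has type-II error $\lambda q\leq q$; optimality of $M^\star$ then forces $p=1-\epsilon$, whence $h(p)=h(1-\epsilon)=h(\epsilon)$ with no case split. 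Note also that writing $1-q$ for $\tr[\sigma(I-M^\star)]$ tacitly assumes $\tr\sigma=1$; for general $\sigma\in\PSD$ the upper inequality as stated can actually fail (e.g.\ $\sigma=2\rho$ at $\epsilon=0.9$), so some normalization is implicit in the paper's intended application.

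For the lower bound, the sharpening you allude to is most cleanly done via the Audenaert--Nussbaum--Szko\l a trace inequality, which for $M_t=\{\rho-t\sigma\geq 0\}$ yields $\tr[\rho(I-M_t)]+t\,\tr[\sigma M_t]\leq t^{1-\alpha}\tr[\rho^\alpha\sigma^{1-\alpha}]$; choosing $t$ so that the right-hand side equals $\epsilon$ then gives both error bounds simultaneously. Your mention of operator monotonicity and Araki--Lieb--Thirring points to the ingredients of that inequality rather than to the sharpening step itself, so the sketch is correct in spirit but one layer removed from the actual mechanism.
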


\begin{fact}(\cite[Lemma 16]{fang2024generalized})\label{fact: DM and Sandwiched relation}
For any $\rho \in \density$ and $\sigma \in \PSD$, it holds that 
\begin{align}
    \mDiv\infdiv*{\rho}{\sigma} \leq D\infdiv*{\rho}{\sigma} \leq \mDiv\infdiv*{\rho}{\sigma} + 2\log \size{\spec(\sigma)},
\end{align}
where $\size{\spec(\sigma)}$ is the number of different eigenvalues of $\sigma$.
\end{fact}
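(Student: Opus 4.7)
My plan splits Fact~\ref{fact: DM and Sandwiched relation} into two inequalities. The lower bound $\mDiv \le D$ follows directly from the data processing inequality: any POVM $M = \{M_x\}_{x \in \set{X}}$ induces a CPTP measurement channel $\mathcal{E}_M(\tau) \defeq \sum_x \tr[M_x \tau]\, \proj{x}$ whose outputs commute, so DPI of the Umegaki divergence gives $D\infdiv*{P_{\rho,M}}{P_{\sigma,M}} = D\infdiv*{\mathcal{E}_M(\rho)}{\mathcal{E}_M(\sigma)} \le D\infdiv*{\rho}{\sigma}$; taking the supremum over $M$ closes this direction.

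For the upper bound I would invoke Hayashi's pinching inequality. Write $\sigma = \sum_{i=1}^{N} \lambda_i \Pi_i$ with $\lambda_i$ pairwise distinct and $N \defeq \size{\spec(\sigma)}$, and let $\mathcal{P}_\sigma(X) \defeq \sum_i \Pi_i X \Pi_i$ be the associated pinching map. The pinching bound $\rho \le N \cdot \mathcal{P}_\sigma(\rho)$ combined with operator monotonicity of the logarithm yields $\tr[\rho \log \rho] \le \log N + \tr[\mathcal{P}_\sigma(\rho) \log \mathcal{P}_\sigma(\rho)]$ (using that $\log \mathcal{P}_\sigma(\rho)$ is block-diagonal with respect to the $\Pi_i$). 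Since $\log \sigma$ also commutes with the pinching, $\tr[\rho \log \sigma] = \tr[\mathcal{P}_\sigma(\rho) \log \sigma]$. Combining gives $D\infdiv*{\rho}{\sigma} \le D\infdiv*{\mathcal{P}_\sigma(\rho)}{\sigma} + \log N$.

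To finish, $\mathcal{P}_\sigma(\rho)$ and $\sigma$ commute, so I would pick a rank-one projective measurement $M^\star$ in their joint eigenbasis; pinching preserves diagonal entries in this basis, so $P_{\rho, M^\star} = P_{\mathcal{P}_\sigma(\rho), M^\star}$ and $D\infdiv*{\mathcal{P}_\sigma(\rho)}{\sigma}$ coincides with the classical KL $D\infdiv*{P_{\rho, M^\star}}{P_{\sigma, M^\star}} \le \mDiv\infdiv*{\rho}{\sigma}$. Substituting back gives $D\infdiv*{\rho}{\sigma} \le \mDiv\infdiv*{\rho}{\sigma} + \log N$, which is already a factor of two sharper than the stated bound; the claim follows \emph{a fortiori}. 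The main technical obstacle I foresee is simply the pinching inequality itself: once the spectral structure of $\sigma$ is exploited, everything else is bookkeeping on commuting operators. An alternative route via $\rDiv{\alpha}$ or $\mrDiv{\alpha}$ and an $\alpha \to 1$ limit may account for the factor of two appearing in the literature version, but is not required to obtain the stated estimate.
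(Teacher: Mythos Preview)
The paper does not supply its own proof of this statement; it is quoted as a Fact from \cite[Lemma~16]{fang2024generalized}. Your argument is correct and self-contained: the lower bound is immediate from the data-processing inequality applied to the measurement channel, and your pinching argument for the upper bound is the standard Hayashi approach. In fact you establish the sharper estimate $D\infdiv*{\rho}{\sigma} \le \mDiv\infdiv*{\rho}{\sigma} + \log\size{\spec(\sigma)}$, which implies the stated bound with the factor~$2$ \emph{a~fortiori}. The only place to be slightly careful is the support issue when invoking operator monotonicity of the logarithm; this is handled by noting that $\rho \mle N\,\mathcal{P}_\sigma(\rho)$ forces $\supp(\rho)\subseteq\supp(\mathcal{P}_\sigma(\rho))$, so the inequality $\tr[\rho\log\rho]\le \log N + \tr[\rho\log\mathcal{P}_\sigma(\rho)]$ is well-posed after restricting to that support, and the passage to $\tr[\mathcal{P}_\sigma(\rho)\log\mathcal{P}_\sigma(\rho)]$ via self-adjointness of the pinching is exactly as you describe. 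Your observation that the factor~$2$ in the cited version likely stems from a R\'enyi-$\alpha$ route is plausible; either constant suffices for the application in Lemma~\ref{lem: DM D jam infty}, where the correction term is $O(\log n)/n$ and vanishes in the limit regardless.
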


The above lemma can be particularly useful when combined with the following fact. 
\begin{fact}[{\cite[Lemma A.1]{fawzi2021defining}}] \label{fact: permutation invariant spec}
Let $X$ be a permutation-invariant operator in $\LinOp(\hilbert^{\otimes n})$ with $\dim{\hilbert} = d$.
Then the number of mutually distinct eigenvalues of $X$ is upper bounded as $\size{\spec(X)} \leq (n+1)^{d} (n+d)^{d^2} = \poly(n)$.
\end{fact}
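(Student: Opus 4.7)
The plan is to exploit Schur--Weyl duality. Since $X$ commutes with the action of the symmetric group $S_n$ on $\hilbert^{\tensor n}$, it lies in the commutant of $S_n$, and one may decompose
\[
\hilbert^{\tensor n} \cong \bigoplus_{\lambda \vdash n,\, \ell(\lambda)\leq d} \set{S}_\lambda \tensor \set{W}_\lambda,
\]
where $\lambda$ ranges over partitions of $n$ with at most $d=\dim\hilbert$ parts, $\set{S}_\lambda$ is the Specht module (irreducible $S_n$-module), and $\set{W}_\lambda$ is the corresponding irreducible module for $\mathrm{GL}(\hilbert)$. Under this decomposition, any permutation-invariant $X$ takes the block form $X = \bigoplus_\lambda I_{\set{S}_\lambda} \tensor X_\lambda$ for some $X_\lambda \in \LinOp(\set{W}_\lambda)$. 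Consequently, the spectrum of $X$ is the union over $\lambda$ of the spectra of the blocks $X_\lambda$, with each eigenvalue having multiplicity an integer multiple of $\dim \set{S}_\lambda$; the identity factor contributes only multiplicity, not new spectral values.

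From this, $\size{\spec(X)} \leq \sum_{\lambda} \dim \set{W}_\lambda$, so it remains to bound both the number of admissible $\lambda$'s and the individual dimensions. The number of partitions of $n$ with at most $d$ parts is at most $(n+1)^d$, since each part $\lambda_i$ takes an integer value in $\{0,1,\ldots,n\}$. For $\dim \set{W}_\lambda$, I would invoke Weyl's dimension formula
\[
\dim \set{W}_\lambda = \prod_{1\leq i<j\leq d} \frac{\lambda_i - \lambda_j + j - i}{j-i},
\]
in which each numerator factor is at most $n+d$ and each denominator factor is at least $1$. Since the product contains $d(d-1)/2 \leq d^2$ factors, one obtains $\dim \set{W}_\lambda \leq (n+d)^{d^2}$. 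Multiplying the two bounds yields $\size{\spec(X)} \leq (n+1)^d (n+d)^{d^2}$, as claimed.

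No substantive obstacle is expected here: the entire argument rests on Schur--Weyl duality together with two elementary counting estimates. The only mild subtlety is checking that the block-diagonal structure above genuinely reduces the count of \emph{distinct} eigenvalues -- which is immediate, since within a fixed isotypic component $I_{\set{S}_\lambda}\tensor X_\lambda$ only the eigenvalues of $X_\lambda$ appear. The resulting bound is of course quite loose (one could sharpen the exponent to $d(d-1)/2$ and refine the partition count), but the stated $\poly(n)$ estimate suffices for the downstream applications in the paper.
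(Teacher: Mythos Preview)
Your argument is correct and is exactly the standard Schur--Weyl approach one would expect. Note, however, that the paper does not actually prove this statement: it is stated as a Fact with a citation to \cite[Lemma~A.1]{fawzi2021defining}, so there is no ``paper's own proof'' to compare against. Your proposal matches the argument in the cited reference essentially line for line (block-diagonalize via Schur--Weyl, bound the number of Young diagrams by $(n+1)^d$, and bound each $\dim\set{W}_\lambda$ via the Weyl dimension formula), so there is nothing further to add.
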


%*******************************************************************************

As the last part of the preliminary, we define the quantum divergences between two sets of quantum states, which is at the center of discussion of this paper.
\begin{definition}(Quantum divergence between two sets of states.)\label{def: divergence between two sets}
Given any quantum divergence $\Div\infdiv{\cdot}{\cdot}$, for any sets $\set{A}\subseteq \density$ and $\set{B}\subseteq \PSD$, the corresponding quantum divergence between these two sets of quantum states is defined as 
    \begin{align}
        \Div\infdiv*{\set{A}}{\set{B}} \defeq \inf_{\rho \in \set{A},\, \sigma \in \set{B}} \Div\infdiv*{\rho}{\sigma}. 
    \end{align}
\end{definition}
Note that if $\Div$ is lower semi-continuous (which is true for most quantum divergences of interest), and $\set{A}$ and $\set{B}$ are compact sets, the infimum in the above expression is always attainable and can thus be replaced by a minimization~\cite[Theorem~7.3.1]{kurdila2005convex}.
From a geometric perspective, this quantity characterizes the distance between two sets $\set{A}$ and $\set{B}$ under the ``distance metric'' $\Div$.
In particular, if $\set{A} = \{\rho\}$ is a singleton, we write $\Div\infdiv*{\rho}{\set{B}} \defeq \Div\infdiv*{\{\rho\}}{\set{B}}$.
For any two sequences of sets $\sets{A}=\{\set{A}_n\}_{n\in \naturals}$ and $\sets{B}=\{\set{B}_n\}_{n \in \naturals}$, the \emph{regularized} divergence is defined by 
\begin{align}
\Div^{\infty}\infdiv*{\sets{A}}{\sets{B}} \defeq \lim_{n \to \infty} \frac{1}{n} \Div\infdiv*{\set{A}_n}{\set{B}_n},
\end{align}
provided the limit on the right-hand side exists.

%*******************************************************************************
%*******************************************************************************
\section{Quantum minimax channel divergence} \label{sec:divergence}

In this section, we develop the notion of \emph{minimax channel divergence} for dual-input single-output quantum channels.
Such notion is very useful in dealing with quantum channels with an adversary input.
%*******************************************************************************
\begin{definition}
Let $\Div\infdiv{\cdot}{\cdot}$ be a quantum divergence, and let $\sys{A}$ and $\sys{B}$ be two quantum systems. 
For any $\channel{N} \in \CPTP(\sys{A}:\sys{B})$ and $\channel{M} \in \CP(\sys{A}:\sys{B})$.
The corresponding \emph{best-case channel divergence} is defined as~\cite{leditzky2018approaches}
\begin{align}
    \supDiv\infdiv*{\channel{N}_{\sys{A}\to\sys{B}}}{\channel{M}_{\sys{A}\to\sys{B}}}\defeq 
    \sup_{\rho_\sys{AR} \in \density(\hilbert_\sys{A}\tensor\hilbert_\sys{R})} \Div\infdiv*{\channel{N}_{\sys{A}\to\sys{B}}(\rho_\sys{AR})}{\channel{M}_{\sys{A}\to\sys{B}}(\rho_\sys{AR})},
\end{align}
where $\sys{R}$ is a reference system of arbitrary dimension.
The \emph{worst-case channel divergence} is defined as~\cite{fang2025adversarial}
\begin{align}
    \iinfDiv\infdiv*{\channel{N}_{\sys{A}\to\sys{B}}}{\channel{M}_{\sys{A}\to\sys{B}}} \defeq 
    \inf_{\sigma_\sys{A},\,\omega_\sys{A} \in \density(\hilbert_\sys{A})} \Div\infdiv*{\channel{N}_{\sys{A}\to\sys{B}}(\rho_\sys{A})}{\channel{M}_{\sys{A}\to\sys{B}}(\omega_\sys{A})}.
\end{align}
\end{definition}
%*******************************************************************************
%\begin{shaded}
\begin{definition}\label{def:minimaxDiv}
Let $\Div\infdiv{\cdot}{\cdot}$ be a quantum divergence, and let $\sys{A}$, $\sys{B}$ and $\sys{E}$ be quantum systems. 
For any $\channel{N} \in \CPTP(\sys{AE}:\sys{B})$ and $\channel{M} \in \CP(\sys{AE}:\sys{B})$, the corresponding \emph{minimax channel divergence} is defined as  
\begin{align}
    \Div^{\minimax}\infdiv*{\channel{N}_{\sys{AE}\to\sys{B}}}{\channel{M}_{\sys{AE}\to\sys{B}}} \defeq
    \adjustlimits \sup_{\rho_\sys{AR} \in \density(\hilbert_\sys{A}\tensor\hilbert_\sys{R})} \inf_{\sigma_\sys{E},\,\omega_\sys{E} \in \density(\hilbert_\sys{E})} \Div\infdiv*{\channel{N}_{\sys{AE}\to\sys{B}}(\rho_\sys{AR}\tensor\sigma_\sys{E})}{\channel{M}_{\sys{AE}\to\sys{B}}(\rho_\sys{AR}\tensor\omega_\sys{E})},
\end{align}
where $\sys{R}$ is a reference system of arbitrary dimension.
Furthermore, the induced regularized minimax channel divergence is defined as
\begin{align}
    \Div^{\minimax,\infty}\infdiv*{\channel{N}_{\sys{AE}\to\sys{B}}}{\channel{M}_{\sys{AE}\to\sys{B}}} \defeq
    \lim_{n\to \infty} \frac{1}{n} \Div^{\minimax}\infdiv*{\channel{N}_{\sys{AE}\to\sys{B}}^{\tensor n}}{\channel{M}_{\sys{AE}\to\sys{B}}^{\tensor n}},
\end{align}
provided that the limit exists.
\end{definition}
%\end{shaded}
%*******************************************************************************
\begin{remark} \label{rem: jammer divergence sup of worst case}
The minimax channel divergence defined above interpolates between the best-case channel divergence (\ie, $\dim{\hilbert_\sys{E}}=1$) and the worst-case  channel divergence (\ie, $\dim{\hilbert_\sys{A}}=1$).
As an immediate observation, one can express the minimax channel divergence in terms of the worst-case channel divergence, \ie, 
\begin{align}\label{eq: DM jammer superadditivity tmp1}
    \Div^{\minimax}\infdiv*{\channel{N}_{\sys{AE}\to\sys{B}}}{\channel{M}_{\sys{AE}\to\sys{B}}} =
    \sup_{\rho_\sys{AR} \in \density(\hilbert_\sys{A} \otimes \hilbert_\sys{R})} \iinfDiv\infdiv*{\channel{N}^{\rho_\sys{AR}}_{\sys{E}\to\sys{B}}}{\channel{M}^{\rho_\sys{AR}}_{\sys{E}\to\sys{B}}}
\end{align}
where, for any $\channel{C}_{\sys{AE}\to\sys{B}} \in \CP(\sys{AE}:\sys{B})$, denote the induced channel $\channel{C}^{\rho_\sys{A}}_{\sys{A}\to\sys{B}}(\sigma_\sys{E})\defeq \channel{C}_{\sys{AE}\to\sys{B}}(\rho_\sys{A}\tensor\sigma_\sys{E})$.
\end{remark}
%*******************************************************************************

Note that we chose the infimum and the supremum in a specific order in the definition of $\Div^{\minimax}$, which, however, can be interchanged, as pointed out in the lemma below.
This is one of the reasons we coined the term ``minimax channel divergence'' as the minimax equality holds for this definition under suitable conditions.
\begin{shaded}
\begin{lemma}(Minimax property.)\label{lem: jammer divergence minimax}
Let $\Div\infdiv{\cdot}{\cdot}$ be a lower-semi-continuous quantum divergence with direct-sum property.
It holds for any $\channel{N} \in \CPTP(\sys{AE}:\sys{B})$ and $\channel{M} \in \CP(\sys{AE}:\sys{B})$ that 
\begin{align}
    \Div^{\minimax} 
    & \infdiv*{\channel{N}_{\sys{AE}\to\sys{B}}}{\channel{M}_{\sys{AE}\to\sys{B}}}\notag\\
    & \defeq
    \adjustlimits \sup_{\rho_\sys{AR} \in \density(\hilbert_\sys{A}\tensor\hilbert_\sys{R})} \inf_{\sigma_\sys{E},\,\omega_\sys{E} \in \density(\hilbert_\sys{E})} \Div\infdiv*{\channel{N}_{\sys{AE}\to\sys{B}}(\rho_\sys{AR}\tensor\sigma_\sys{E})}{\channel{M}_{\sys{AE}\to\sys{B}}(\rho_\sys{AR}\tensor\omega_\sys{E})} \label{eq: minimax 1}\\
    &= \adjustlimits \sup_{\rho_\sys{A} \in \density(\hilbert_\sys{A})} \inf_{\sigma_\sys{E},\,\omega_\sys{E} \in \density(\hilbert_\sys{E})} \Div\infdiv*{\channel{N}_{\sys{AE}\to\sys{B}}(\proj{\rho}_\sys{AR}\tensor\sigma_\sys{E})}{\channel{M}_{\sys{AE}\to\sys{B}}(\proj{\rho}_\sys{AR}\tensor\omega_\sys{E})} \label{eq: minimax 2}\\
    & = \adjustlimits \inf_{\sigma_\sys{E},\,\omega_\sys{E} \in \density(\hilbert_\sys{E})} \sup_{\rho_\sys{A} \in \density(\hilbert_\sys{A})} \Div\infdiv*{\channel{N}_{\sys{AE}\to\sys{B}}(\proj{\rho}_\sys{AR}\tensor\sigma_\sys{E})}{\channel{M}_{\sys{AE}\to\sys{B}}(\proj{\rho}_\sys{AR}\tensor\omega_\sys{E})} \label{eq: minimax 3}\\
    & = \adjustlimits \inf_{\sigma_\sys{E},\,\omega_\sys{E} \in \density(\hilbert_\sys{E})} \sup_{\rho_\sys{AR} \in \density(\hilbert_\sys{A}\tensor\hilbert_\sys{R})}  \Div\infdiv*{\channel{N}_{\sys{AE}\to\sys{B}}(\rho_\sys{AR}\tensor\sigma_\sys{E})}{\channel{M}_{\sys{AE}\to\sys{B}}(\rho_\sys{AR}\tensor\omega_\sys{E})} \label{eq: minimax 4}
\end{align}
where $\ket{\rho}_{\sys{AR}}$ is a purification of $\rho_\sys{A}$ with $\hilbert_\sys{R} = \hilbert_\sys{A}$.
\end{lemma}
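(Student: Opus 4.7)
The plan is to establish the chain of equalities $\eqref{eq: minimax 1} = \eqref{eq: minimax 2} = \eqref{eq: minimax 3} = \eqref{eq: minimax 4}$ by combining a purification reduction with a Sion-type minimax argument that crucially exploits the direct-sum property of $\Div$. Throughout, abbreviate
\begin{equation*}
    g(\rho_\sys{AR}, \sigma_\sys{E}, \omega_\sys{E}) \defeq \Div\infdiv*{\channel{N}_{\sys{AE}\to\sys{B}}(\rho_\sys{AR}\tensor\sigma_\sys{E})}{\channel{M}_{\sys{AE}\to\sys{B}}(\rho_\sys{AR}\tensor\omega_\sys{E})}.
\end{equation*}
As noted in the paper, the direct-sum property together with the data processing inequality implies that $\Div$ is jointly convex, and hence so is $g$ in its three arguments.

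\textbf{Purification reduction.} For any mixed $\rho_\sys{AR}$ on a reference of arbitrary dimension, let $\ket{\psi}_{\sys{ARR}'}$ be a purification. Since $\channel{N}, \channel{M}$ act trivially on reference systems, tracing out $\sys{R}'$ commutes with both channels, and DPI yields $g(\rho_\sys{AR}, \sigma, \omega) \leq g(\proj{\psi}_{\sys{ARR}'}, \sigma, \omega)$. Any two purifications of $\rho_\sys{A} \defeq \tr_\sys{R}\rho_\sys{AR}$ are related by a reference-side isometry, under which the divergence is invariant. Hence the supremum over mixed $\rho_\sys{AR}$ is attained on canonical purifications $\ket{\rho}_{\sys{AR}}$ with $\hilbert_\sys{R} = \hilbert_\sys{A}$. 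Applying this reduction outside and inside the infimum yields $\eqref{eq: minimax 1} = \eqref{eq: minimax 2}$ and $\eqref{eq: minimax 4} = \eqref{eq: minimax 3}$, respectively.

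\textbf{Minimax interchange.} The inequality $\eqref{eq: minimax 1} \leq \eqref{eq: minimax 4}$ is the standard max-min inequality. For the reverse, the key observation is that given any finitely supported probability measure $\mu = \sum_i p_i \delta_{\rho_i}$ on $\density(\hilbert_\sys{A}\tensor\hilbert_\sys{R})$, the classical-quantum state $\rho^{\mu}_{\sys{ARR}'} \defeq \sum_i p_i\, \rho_i \tensor \proj{i}_{\sys{R}'}$ is itself a valid input for the supremum in $\eqref{eq: minimax 1}$, and the direct-sum property gives
\begin{equation*}
    g(\rho^{\mu}_{\sys{ARR}'}, \sigma, \omega) = \sum_i p_i\, g(\rho_i, \sigma, \omega) = \exs{\mu}{g(\rho, \sigma, \omega)}.
\end{equation*}
Therefore $\eqref{eq: minimax 1} \geq \sup_\mu \inf_{\sigma, \omega}\exs{\mu}{g}$. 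Now apply a Kneser/Fan-type minimax theorem to $((\sigma, \omega), \mu) \mapsto \exs{\mu}{g(\rho, \sigma, \omega)}$ on $(\density(\hilbert_\sys{E}))^2 \times \mathcal{P}$, where $\mathcal{P}$ denotes Borel probability measures on $\density(\hilbert_\sys{A}\tensor\hilbert_\sys{R})$: the set $(\density(\hilbert_\sys{E}))^2$ is convex and compact; the function is convex and lower-semi-continuous in $(\sigma, \omega)$ by joint convexity and LSC of $\Div$; and it is affine (hence concave) in $\mu$. This gives $\sup_\mu \inf_{\sigma, \omega}\exs{\mu}{g} = \inf_{\sigma, \omega}\sup_\mu \exs{\mu}{g} = \inf_{\sigma, \omega}\sup_{\rho_\sys{AR}} g = \eqref{eq: minimax 4}$, completing the proof.

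\textbf{Main obstacle.} The chief difficulty is that $g$ is convex (not concave) in $\rho_\sys{AR}$, so a standard minimax theorem cannot be invoked on the original formulation. The resolution relies on the direct-sum property: it permits embedding a mixture of inputs as a genuine classical-quantum density matrix on an enlarged reference, thereby linearizing the max-side dependence and placing the problem within the scope of Kneser/Fan minimax. The arbitrariness of the auxiliary system $\sys{R}$ in the definition of $\Div^{\minimax}$ is therefore essential. The remaining technical points---extending $\mu$ from finitely supported to arbitrary Borel probability measures and dealing with possible $+\infty$ values of $\Div$---are routine.
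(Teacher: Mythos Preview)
Your proof is correct, and the purification reduction step matches the paper exactly. The minimax interchange, however, is handled differently. The paper observes that after the purification reparametrization the objective $\Div\infdiv*{\channel{N}(\proj{\rho}_\sys{AR}\tensor\sigma_\sys{E})}{\channel{M}(\proj{\rho}_\sys{AR}\tensor\omega_\sys{E})}$ is \emph{concave} in the marginal $\rho_\sys{A}$ (this is \cite[Lemma~II.3]{leditzky2018approaches}, itself a consequence of the direct-sum property), and then applies Sion's theorem directly to \eqref{eq: minimax 2}--\eqref{eq: minimax 3}. So the ``main obstacle'' you identify---non-concavity of $g$ in $\rho_\sys{AR}$---is in fact dissolved by the purification step rather than worked around. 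Your route instead lifts to probability measures $\mu$ over inputs, uses the direct-sum property to realise $\exs{\mu}{g}$ as $g$ evaluated at a genuine cq-state on an enlarged reference, and then invokes a Kneser/Fan minimax on the affine-in-$\mu$ problem. Both arguments extract the needed concavity from the direct-sum property, but the paper's is shorter because it cites the concavity lemma as a black box, while yours is more self-contained and makes explicit why the arbitrariness of the reference dimension is essential. Either approach is perfectly acceptable here.
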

\end{shaded}
\begin{proof}
    For any $\channel{C}_{\sys{AE}\to\sys{B}} \in \CP(\sys{AE}:\sys{B})$, let $\channel{C}^{\varrho_\sys{E}}_{\sys{A}\to\sys{B}}$ denotes the induced channel $\channel{C}^{\varrho_\sys{E}}_{\sys{A}\to\sys{B}}(\rho_\sys{A})\defeq \channel{C}_{\sys{AE}\to\sys{B}}(\rho_\sys{A}\tensor\varrho_\sys{E})$.
    For any fixed $\sigma_\sys{E},\omega_\sys{E} \in \density(\hilbert_\sys{E})$, we can write  
    \begin{align}
        \sup_{\rho_\sys{AR} \in \density(\hilbert_\sys{A}\tensor\hilbert_\sys{R})} & \Div\infdiv*{\channel{N}_{\sys{AE}\to\sys{B}}(\rho_\sys{AR}\tensor\sigma_\sys{E})}{\channel{M}_{\sys{AE}\to\sys{B}}(\rho_\sys{AR}\tensor\omega_\sys{E})} \notag \\
       & =  \sup_{\rho_\sys{AR} \in \density(\hilbert_\sys{A}\tensor\hilbert_\sys{R})} \Div\infdiv*{\channel{N}^{\sigma_\sys{E}}_{\sys{A}\to\sys{B}}(\rho_\sys{AR})}{\channel{M}^{\omega_\sys{E}}_{\sys{A}\to\sys{B}}(\rho_\sys{AR})}.
    \end{align}
    As a consequence of purification, data processing and the Schmidt decomposition, we can restrict $\rho_\sys{AR}$ to pure states, and restrict the  system $\sys{R}$ to be isomorphic to the system $\sys{A}$, \ie, 
    \begin{align}
         \sup_{\rho_\sys{AR} \in \density(\hilbert_\sys{A}\tensor\hilbert_\sys{R})} \Div& \infdiv*{\channel{N}^{\sigma_\sys{E}}_{\sys{A}\to\sys{B}}(\rho_\sys{AR})}{\channel{M}^{\omega_\sys{E}}_{\sys{A}\to\sys{B}}(\rho_\sys{AR})}\notag\\
         & =  \sup_{\ket{\phi} \in \hilbert_\sys{AR}} \Div\infdiv*{\channel{N}^{\sigma_\sys{E}}_{\sys{A}\to\sys{B}}(\proj{\phi}_\sys{AR})}{\channel{M}^{\omega_\sys{E}}_{\sys{A}\to\sys{B}}(\proj{\phi}_\sys{AR})}\\
         & = \sup_{\rho_\sys{A} \in \density(\hilbert_\sys{A})} \Div\infdiv*{\channel{N}^{\sigma_\sys{E}}_{\sys{A}\to\sys{B}}(\proj{\rho}_\sys{AR})}{\channel{M}^{\omega_\sys{E}}_{\sys{A}\to\sys{B}}(\proj{\rho}_\sys{AR})},
    \end{align}
    where the second line follows from the isometric invariance of quantum divergence. This implies that \eqref{eq: minimax 3} equals~\eqref{eq: minimax 4}.
    By~\cite[Lemma II.3]{leditzky2018approaches}, the direct-sum property implies that the objective function
   $\Div\infdiv*{\channel{N}_{\sys{AE}\to\sys{B}}(\proj{\rho}_\sys{AR}\tensor\sigma_\sys{E})}{\channel{M}_{\sys{AE}\to\sys{B}}(\proj{\rho}_\sys{AR}\tensor\omega_\sys{E})}$
    is concave in the marginal state $\rho_\sys{A}$. Furthermore, the combination of the direct-sum property and data processing ensures convexity in $\sigma_\sys{E}$ and $\omega_\sys{E}$. Therefore, by the minimax theorem~\cite{Sion1958}, we can interchange the order of the supremum and infimum, so that \eqref{eq: minimax 2} equals \eqref{eq: minimax 3}. It is also clear that \eqref{eq: minimax 4} is no smaller than \eqref{eq: minimax 1}, which in turn is no smaller than \eqref{eq: minimax 2}. This completes the proof.
\end{proof}
%*******************************************************************************
\begin{remark}\label{rem: jammer divergence sup of best case}
Lemma~\ref{lem: jammer divergence minimax} allows us to express the minimax channel divergence in terms of the best-case channel divergence as
\begin{align}
    \Div^{\minimax}\infdiv*{\channel{N}_{\sys{AE}\to\sys{B}}}{\channel{M}_{\sys{AE}\to\sys{B}}} =
    \inf_{\sigma_\sys{E},\, \omega_\sys{E} \in \density(\hilbert_\sys{E})} \supDiv\infdiv*{\channel{N}^{\sigma_\sys{E}}_{\sys{A}\to\sys{B}}}{\channel{M}^{\omega_\sys{E}}_{\sys{A}\to\sys{B}}}.
\end{align}
\end{remark}

%*******************************************************************************
In the following, we show an interesting property of the minimax channel divergence when applied to $n$-fold tensor product channels.
In particular, we prove that the infimum in Eq.~\eqref{eq: minimax 3} can be restricted to permutation-invariant states without changing the value of $n$-fold $\Div^\minimax$.
We refer to Appendix~\ref{app:lem:bipartite:covarian} for a more generalized version of this lemma.
\begin{shaded}
\begin{lemma}(Symmetry reduction.) \label{lem:mimmaxDiv:perm}
Let $\Div\infdiv{\cdot}{\cdot}$ be a lower-semi-continuous quantum divergence with direct-sum property. Let $\sys{A}$, $\sys{B}$ and $\sys{E}$ be quantum systems. 
For any $\channel{N} \in \CPTP(\sys{AE}:\sys{B})$ and $\channel{M} \in \CP(\sys{AE}:\sys{B})$, and positive integer $n\geq 2$, 
\begin{align}
\Div^{\minimax}& \infdiv*{\channel{N}_{\sys{AE}\to\sys{B}}^{\tensor n}}{\channel{M}_{\sys{AE}\to\sys{B}}^{\tensor n}} \notag \\
& = \adjustlimits \inf_{\substack{\sigma_{\syss{E}_1^n},\,\omega_{\syss{E}_1^n} \\ \in \PER(\hilbert_\sys{E}^{\tensor n})}} \sup_{\substack{\rho_{\syss{A}_1^n} \in \\ \density(\hilbert_\sys{A}^{\tensor n})}} \Div\infdiv*{\channel{N}_{\sys{AE}\to\sys{B}}^{\tensor n}(\proj{\rho}_{\syss{A}_1^n\syss{R}_1^n}\tensor\sigma_{\syss{E}_1^n})}{\channel{M}_{\sys{AE}\to\sys{B}}^{\tensor n}(\proj{\rho}_{\syss{A}_1^n\syss{R}_1^n}\tensor\omega_{\syss{E}_1^n})}\label{eq: minimax symmetry reduction 1}\\
& = \adjustlimits \inf_{\substack{\sigma_{\syss{E}_1^n},\,\omega_{\syss{E}_1^n} \\ \in \PER(\hilbert_\sys{E}^{\tensor n})}} \sup_{\substack{\rho_{\syss{A}_1^n} \in \\ \PER(\hilbert_\sys{A}^{\tensor n})}} \Div\infdiv*{\channel{N}_{\sys{AE}\to\sys{B}}^{\tensor n}(\proj{\rho}_{\syss{A}_1^n\syss{R}_1^n}\tensor\sigma_{\syss{E}_1^n})}{\channel{M}_{\sys{AE}\to\sys{B}}^{\tensor n}(\proj{\rho}_{\syss{A}_1^n\syss{R}_1^n}\tensor\omega_{\syss{E}_1^n})},\label{eq: minimax symmetry reduction 2}
\end{align}
where $\ket{\rho}_{\sys{AR}}$ is a purification of $\rho_\sys{A}$ with $\hilbert_\sys{R} = \hilbert_\sys{A}$.
\end{lemma}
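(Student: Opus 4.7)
The plan is to invoke Lemma~\ref{lem: jammer divergence minimax} (the ``inf-sup'' form~\eqref{eq: minimax 4}) and then symmetrize the two optimization variables in turn. The only channel-level ingredient needed is permutation covariance of the $n$-fold tensor product: for every $\pi\in S_n$ with unitary representations $W_\pi^\sys{A},W_\pi^\sys{E},W_\pi^\sys{B}$, one has $\channel{N}_{\sys{AE}\to\sys{B}}^{\tensor n}\bigl((W_\pi^\sys{A}\tensor W_\pi^\sys{E})X(W_\pi^\sys{A}\tensor W_\pi^\sys{E})^\herm\bigr)=W_\pi^\sys{B}\channel{N}_{\sys{AE}\to\sys{B}}^{\tensor n}(X)(W_\pi^\sys{B})^\herm$, and likewise for $\channel{M}^{\tensor n}$. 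Combined with the unitary invariance of $\Div$ that follows from DPI, this lets me move any permutation off of the jamming slot and onto the input slot (or the output) without changing the divergence.

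For~\eqref{eq: minimax symmetry reduction 1}, given arbitrary $\sigma,\omega\in\density(\hilbert_\sys{E}^{\tensor n})$, I would symmetrize: $\bar\sigma\defeq\tfrac{1}{n!}\sum_\pi W_\pi^\sys{E}\sigma (W_\pi^\sys{E})^\herm\in\PER(\hilbert_\sys{E}^{\tensor n})$, and likewise $\bar\omega$. Fixing $\rho_{\syss{A}_1^n\syss{R}_1^n}$, joint convexity of $\Div$ (which follows from direct-sum and DPI) bounds $\Div\infdiv*{\channel{N}^{\tensor n}(\rho\tensor\bar\sigma)}{\channel{M}^{\tensor n}(\rho\tensor\bar\omega)}$ by the $\pi$-average of $\Div\infdiv*{\channel{N}^{\tensor n}(\rho\tensor W_\pi^\sys{E}\sigma (W_\pi^\sys{E})^\herm)}{\channel{M}^{\tensor n}(\rho\tensor W_\pi^\sys{E}\omega (W_\pi^\sys{E})^\herm)}$. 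Using the covariance identity on $\sys{AE}\to\sys{B}$ together with unitary invariance of $\Div$, each summand equals the divergence at the original $\sigma,\omega$ but with $\rho$ replaced by $(W_\pi^\sys{A}\tensor W_\pi^\sys{R})^\herm\rho(W_\pi^\sys{A}\tensor W_\pi^\sys{R})$, and hence is $\leq \sup_{\rho'}\Div\infdiv*{\channel{N}^{\tensor n}(\rho'\tensor\sigma)}{\channel{M}^{\tensor n}(\rho'\tensor\omega)}$. Taking $\sup_\rho$ on the left finishes the nontrivial direction of~\eqref{eq: minimax symmetry reduction 1}; the reverse inequality is trivial since we only shrink the feasible set of jamming states.

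For~\eqref{eq: minimax symmetry reduction 2}, I now fix $\sigma,\omega\in\PER(\hilbert_\sys{E}^{\tensor n})$ and restrict the supremum to pure $\proj{\rho}$ with $\hilbert_\sys{R}\simeq\hilbert_\sys{A}$, as justified by Lemma~\ref{lem: jammer divergence minimax}. I would construct an explicit purification of the symmetrized marginal $\bar\rho_\sys{A}\defeq\tfrac{1}{n!}\sum_\pi W_\pi^\sys{A}\rho_\sys{A}(W_\pi^\sys{A})^\herm\in\PER(\hilbert_\sys{A}^{\tensor n})$ by augmenting $\sys{R}$ with a classical register $\sys{P}$: $\ket{\bar\rho}\defeq\tfrac{1}{\sqrt{n!}}\sum_\pi (W_\pi^\sys{A}\tensor I^\sys{R})\ket{\rho}\tensor\ket{\pi}_\sys{P}$. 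Measuring $\sys{P}$ in the computational basis is CPTP, and the direct-sum property identifies the post-measurement divergence with the $\pi$-average of $\Div\infdiv*{\channel{N}^{\tensor n}(W_\pi^\sys{A}\proj{\rho}(W_\pi^\sys{A})^\herm\tensor\sigma)}{\channel{M}^{\tensor n}(W_\pi^\sys{A}\proj{\rho}(W_\pi^\sys{A})^\herm\tensor\omega)}$; each summand collapses to $\Div\infdiv*{\channel{N}^{\tensor n}(\proj{\rho}\tensor\sigma)}{\channel{M}^{\tensor n}(\proj{\rho}\tensor\omega)}$ via covariance and the permutation-invariance of $\sigma,\omega$. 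DPI then gives $\Div\infdiv*{\channel{N}^{\tensor n}(\proj{\bar\rho}\tensor\sigma)}{\channel{M}^{\tensor n}(\proj{\bar\rho}\tensor\omega)}\geq \Div\infdiv*{\channel{N}^{\tensor n}(\proj{\rho}\tensor\sigma)}{\channel{M}^{\tensor n}(\proj{\rho}\tensor\omega)}$, so the supremum is attained on $\PER$-states. The main subtlety throughout is the purification bookkeeping in this last step: one has to carry the classical register $\sys{P}$ so that the mixed symmetrization $\bar\rho_\sys{A}$ really arises as the marginal of a pure state, and so that measuring $\sys{P}$ provides the CPTP map that activates DPI. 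Everything else reduces to direct applications of joint convexity, direct-sum, covariance, and data processing.
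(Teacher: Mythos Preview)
Your proof is correct and follows essentially the same route as the paper: both arguments symmetrize the jamming pair via joint convexity and permutation covariance of $\channel{N}^{\tensor n},\channel{M}^{\tensor n}$ for~\eqref{eq: minimax symmetry reduction 1}, and for~\eqref{eq: minimax symmetry reduction 2} the paper simply invokes \cite[Proposition~II.4]{leditzky2018approaches} whereas your classical-register purification is precisely an explicit proof of that cited fact. The only small bookkeeping item worth making explicit is that your enlarged reference $\sys{R}\sys{P}$ yields the same divergence value as the canonical purification on $\sys{R}\simeq\sys{A}$, which follows from the isometric invariance already used in the proof of Lemma~\ref{lem: jammer divergence minimax}.
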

\end{shaded}
\begin{proof}
    The key is to prove the following expression, as a function of $\sigma_{\syss{E}_1^n}$ and $\omega_{\syss{E}_1^n}$,
    \begin{equation}
    f_n(\sigma_{\syss{E}_1^n}, \omega_{\syss{E}_1^n}) \defeq \sup_{\rho_{\syss{A}_1^n} \in \density(\hilbert_\sys{A}^{\tensor n})} \underbrace{\Div\infdiv*{\channel{N}_{\sys{AE}\to\sys{B}}^{\tensor n}(\proj{\rho}_{\syss{A}_1^n\syss{R}_1^n}\tensor\sigma_{\syss{E}_1^n})}{\channel{M}_{\sys{AE}\to\sys{B}}^{\tensor n}(\proj{\rho}_{\syss{A}_1^n\syss{R}_1^n}\tensor\omega_{\syss{E}_1^n})}}_{\defas f_n(\sigma_{\syss{E}_1^n}, \omega_{\syss{E}_1^n}|\rho_{\syss{A}_1^n})}
    \end{equation}
    to be permutation invariant.
    For each permutation $\pi$ in the permutation group $\set{S}_n$, we define the unitary transformation $\pi_{\syss{A}_1^n}\in\LinOp(\hilbert_\sys{A}^{\tensor n})$ as 
    \begin{equation}
    \pi_{\syss{A}_1^n}: \ket{\psi_1}_{\sys{A}_1}\ket{\psi_2}_{\sys{A}_2}\cdots\ket{\psi_n}_{\sys{A}_n}
    \mapsto \ket{\psi_{\pi(1)}}_{\sys{A}_1}\ket{\psi_{\pi(2)}}_{\sys{A}_2}\cdots\ket{\psi_{\pi(n)}}_{\sys{A}_n}.
    \end{equation}
    We also abuse the notation a bit, and use $\pi_{\syss{A}_1^n}$ to denote the CPTP $\pi_{\syss{A}_1^n}: \rho_{\syss{A}_1^n} \mapsto \pi_{\syss{A}_1^n} \rho_{\syss{A}_1^n} \pi_{\syss{A}_1^n}^\herm$ as well.
    By carrying over the permutation of the output systems of the tensor of quantum channel to its input system (\cf~\cite[Eq.~(2)]{boche2018fully}~and~\cite[Theorem~3.3]{belzig2024fully}), we have 
    \begin{align}
        \pi_{\syss{B}_1^n}\left(\channel{N}_{\sys{AE}\to\sys{B}}^{\tensor n}(\proj{\rho}_{\syss{A}_1^n\syss{R}_1^n}\tensor\sigma_{\syss{E}_1^n})\right)
        &= \channel{N}_{\sys{AE}\to\sys{B}}^{\tensor n}\left(\pi_{\syss{A}_1^n\syss{E}_1^n}(\proj{\rho}_{\syss{A}_1^n\syss{R}_1^n}\tensor\sigma_{\syss{E}_1^n})\right) \\
        &= \channel{N}_{\sys{AE}\to\sys{B}}^{\tensor n}\left(\pi_{\syss{A}_1^n}(\proj{\rho}_{\syss{A}_1^n\syss{R}_1^n})\tensor\pi_{\syss{E}_1^n}(\sigma_{\syss{E}_1^n})\right)\\
        &= \channel{N}_{\sys{AE}\to\sys{B}}^{\tensor n}\left(\proj{\pi_{\syss{A}_1^n}\rho}_{\syss{A}_1^n\syss{R}_1^n}\tensor\pi_{\syss{E}_1^n}(\sigma_{\syss{E}_1^n})\right).
    \end{align}
    Similarly, the above equalities also hold for the channel $\channel{M}$.
    Since, quantum divergences are invariant under unitary transformations, we have 
    \begin{align}
        f_n&(\sigma_{\syss{E}_1^n}, \omega_{\syss{E}_1^n}|\rho_{\syss{A}_1^n}) \notag\\
        &= \Div\infdiv*{\pi_{\syss{B}_1^n}\left(\channel{N}_{\sys{AE}\to\sys{B}}^{\tensor n}(\proj{\rho}_{\syss{A}_1^n\syss{R}_1^n}\tensor\sigma_{\syss{E}_1^n})\right)}{\pi_{\syss{B}_1^n}\left(\channel{M}_{\sys{AE}\to\sys{B}}^{\tensor n}(\proj{\rho}_{\syss{A}_1^n\syss{R}_1^n}\tensor\omega_{\syss{E}_1^n})\right)} \\
        &=\Div\infdiv*{\channel{N}_{\sys{AE}\to\sys{B}}^{\tensor n}\left(\proj{\pi_{\syss{A}_1^n}\rho}_{\syss{A}_1^n\syss{R}_1^n}\!\tensor\!\pi_{\syss{E}_1^n}(\sigma_{\syss{E}_1^n})\right)}{\channel{M}_{\sys{AE}\to\sys{B}}^{\tensor n}\left(\proj{\pi_{\syss{A}_1^n}\rho}_{\syss{A}_1^n\syss{R}_1^n}\!\tensor\!\pi_{\syss{E}_1^n}(\omega_{\syss{E}_1^n})\right)}\\
        &= f_n(\pi_{\syss{E}_1^n}(\sigma_{\syss{E}_1^n}), \pi_{\syss{E}_1^n}(\omega_{\syss{E}_1^n})|\pi_{\syss{E}_1^n}(\rho_{\syss{A}_1^n})) .
    \end{align}
    Note that the set of all density operators on systems $\syss{A}_1^n$ is permutation invariant.
    Hence,
    \begin{align}
        f_n(\pi_{\syss{E}_1^n}(\sigma_{\syss{E}_1^n}), \pi_{\syss{E}_1^n}(\omega_{\syss{E}_1^n}))
        &= \sup_{\rho_{\syss{A}_1^n} \in \density(\hilbert_\sys{A}^{\tensor n})} f_n(\pi_{\syss{E}_1^n}(\sigma_{\syss{E}_1^n}), \pi_{\syss{E}_1^n}(\omega_{\syss{E}_1^n})|\rho_{\syss{A}_1^n})\\
        &= \sup_{\rho_{\syss{A}_1^n} \in \density(\hilbert_\sys{A}^{\tensor n})} f_n(\pi_{\syss{E}_1^n}(\sigma_{\syss{E}_1^n}), \pi_{\syss{E}_1^n}(\omega_{\syss{E}_1^n})|\pi_{\syss{E}_1^n}(\rho_{\syss{A}_1^n})) \\
        &= \sup_{\rho_{\syss{A}_1^n} \in \density(\hilbert_\sys{A}^{\tensor n})} f_n(\sigma_{\syss{E}_1^n}, \omega_{\syss{E}_1^n}|\rho_{\syss{A}_1^n}),
    \end{align}
    \ie, $f_n$ is permutation invariant.

    Therefore, for any optimizing pair $(\sigma_{\syss{E}_1^n}^\star, \omega_{\syss{E}_1^n}^\star)$, we can replace them by the average over all of its permutations, \ie,
    \begin{align}
    (\sigma_{\syss{E}_1^n}^\new, \omega_{\syss{E}_1^n}^\new)
    & \defeq \left(\sum_{\pi\in\set{S}_n} \size{\set{S}_n}^{-1}\cdot\pi_{\syss{E}_1^n}(\sigma_{\syss{E}_1^n}^\star), \sum_{\pi\in\set{S}_n} \size{\set{S}_n}^{-1}\cdot\pi_{\syss{E}_1^n}(\omega_{\syss{E}_1^n}^\star)\right)
    \end{align}
    which is a pair of permutation invariant states; hence, finishing the proof of Eq.~\eqref{eq: minimax symmetry reduction 1}. Since the induced channels $\channel{N}_{\sys{AE}\to\sys{B}}^{\tensor n}(\cdot\tensor\sigma_{\syss{E}_1^n})$ and $\channel{M}_{\sys{AE}\to\sys{B}}^{\tensor n}(\cdot\tensor\omega_{\syss{E}_1^n})$ are permutation invariant for $\sigma_{\syss{E}_1^n},\,\omega_{\syss{E}_1^n} \in \PER(\hilbert_\sys{E}^{\tensor n})$, we have Eq.~\eqref{eq: minimax symmetry reduction 2} by applying~\cite[Proposition II.4]{leditzky2018approaches}.
\end{proof}
%*******************************************************************************
In the remainder of this section, we focuses on specific choices of $\Div$, namely measured \Renyi divergence $\mrDiv{\alpha}$, measured divergence $\mDiv$ and Umegaki divergence $\uDiv$, for further exploration of the properties of $\Div^\minimax$.
%*******************************************************************************
\begin{shaded}
\begin{lemma}(Super-additivity.)
\label{lem:superadditive:mrDiv}
Let $\alpha \in (0,+\infty)$.
For any $\channel{N}^{(i)}_{\sys{AE}\to\sys{B}} \in \CPTP(\sys{A}_i\sys{E}_i:\sys{B}_i)$ and $\channel{M}^{(i)}_{\sys{AE}\to\sys{B}} \in \CP(\sys{A}_i\sys{E}_i:\sys{B}_i)$ with $i\in \{1,2\}$,  
\begin{equation}\label{eq:superadditive:mrDiv}
\begin{aligned}
    \mrDiv{\alpha}^\minimax\infdiv*{\channel{N}^{(1)}_{\sys{A}_1\sys{E}_1\to\sys{B}_1}\tensor \channel{N}^{(2)}_{\sys{A}_2\sys{E}_2\to\sys{B}_2}}{\channel{M}^{(1)}_{\sys{A}_1\sys{E}_1\to\sys{B}_1}\tensor \channel{M}^{(2)}_{\sys{A}_2\sys{E}_2\to\sys{B}_2}}
    \geq \hspace{100pt}\\
    \mrDiv{\alpha}^\minimax\infdiv*{\channel{N}^{(1)}_{\sys{A}_1\sys{E}_1\to\sys{B}_1}}{\channel{M}^{(1)}_{\sys{A}_2\sys{E}_2\to\sys{B}_2}} +     \mrDiv{\alpha}^\minimax\infdiv*{\channel{N}^{(2)}_{\sys{A}_1\sys{E}_1\to\sys{B}_1}}{\channel{M}^{(2)}_{\sys{A}_2\sys{E}_2\to\sys{B}_2}}.
\end{aligned}\end{equation}
\end{lemma}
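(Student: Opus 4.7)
The plan is to restrict the outer supremum in the LHS to product inputs, and then, for fixed jammer states, to choose an \emph{adaptive} product POVM of the form $\{M^{(1)}_{x_1}\tensor M^{(2),x_1}_{x_2}\}$ in the variational definition of $\mrDiv{\alpha}$. Under such a POVM the induced joint distribution factorizes as $p(x_1,x_2)=p_1(x_1)\,p_{2|x_1}(x_2)$, and the classical R\'enyi chain rule then decomposes $D_\alpha(p\|q)$ into a contribution from the first output plus a conditional contribution from the second.

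For $i\in\{1,2\}$, let $\ket{\rho^{(i)}}_{\sys{A}_i\sys{R}_i}$ be a pure-state input $\varepsilon$-achieving $\mrDiv{\alpha}^\minimax(\channel{N}^{(i)}\|\channel{M}^{(i)})$, and denote the induced channels $\sys{E}_i\to\sys{B}_i\sys{R}_i$ by $\Phi_i(\cdot)\defeq \channel{N}^{(i)}(\proj{\rho^{(i)}}\tensor\cdot)$ and $\Psi_i(\cdot)\defeq \channel{M}^{(i)}(\proj{\rho^{(i)}}\tensor\cdot)$. Plugging the product input $\ket{\rho^{(1)}}\tensor\ket{\rho^{(2)}}$ in the outer sup on the LHS reduces the claim to showing
\begin{equation*}
\inf_{\sigma_{12},\omega_{12}}\mrDiv{\alpha}\infdiv*{(\Phi_1\tensor\Phi_2)(\sigma_{12})}{(\Psi_1\tensor\Psi_2)(\omega_{12})}\geq\mrDiv{\alpha}^\minimax(\channel{N}^{(1)}\|\channel{M}^{(1)})+\mrDiv{\alpha}^\minimax(\channel{N}^{(2)}\|\channel{M}^{(2)}),
\end{equation*}
where, at the optimal $\rho^{(i)}$, one has $\inf_{\sigma,\omega}\mrDiv{\alpha}\infdiv*{\Phi_i(\sigma)}{\Psi_i(\omega)}=\mrDiv{\alpha}^\minimax(\channel{N}^{(i)}\|\channel{M}^{(i)})$.

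For fixed $\sigma_{12},\omega_{12}$, write $\tau_{12}\defeq(\Phi_1\tensor\Phi_2)(\sigma_{12})$ and $\tau'_{12}\defeq(\Psi_1\tensor\Psi_2)(\omega_{12})$, and apply the adaptive POVM above. A direct adjoint-channel calculation yields
\begin{equation*}
\tr_{\sys{B}_1\sys{R}_1}\!\bigl[(M^{(1)}_{x_1}\tensor I)\tau_{12}\bigr]=\Phi_2\!\bigl(\tr_{\sys{E}_1}\!\bigl[(\Phi_1^\herm(M^{(1)}_{x_1})\tensor I)\sigma_{12}\bigr]\bigr),
\end{equation*}
so the conditional post-measurement state on $\sys{B}_2\sys{R}_2$ given outcome $x_1$ is of the form $\tau_{2|x_1}=\Phi_2(\sigma_{2|x_1})$ for a normalized effective state $\sigma_{2|x_1}$ on $\sys{E}_2$, and analogously $\tau'_{2|x_1}=\Psi_2(\omega_{2|x_1})$. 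Consequently, for every $x_1$,
\begin{equation*}
\mrDiv{\alpha}\infdiv*{\tau_{2|x_1}}{\tau'_{2|x_1}}\geq\inf_{\sigma,\omega}\mrDiv{\alpha}\infdiv*{\Phi_2(\sigma)}{\Psi_2(\omega)}=\mrDiv{\alpha}^\minimax(\channel{N}^{(2)}\|\channel{M}^{(2)}).
\end{equation*}

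Using the chain-rule identity of classical R\'enyi divergence on the factorized distribution,
\begin{equation*}
D_\alpha(p\|q)=\frac{1}{\alpha-1}\log\sum_{x_1}p_1(x_1)^{\alpha}q_1(x_1)^{1-\alpha}\,\exp\!\bigl((\alpha-1)D_\alpha(p_{2|x_1}\|q_{2|x_1})\bigr),
\end{equation*}
and choosing $M^{(2),x_1}$ to $\varepsilon$-achieve $\mrDiv{\alpha}(\tau_{2|x_1}\|\tau'_{2|x_1})$, a short sign analysis (handling $\alpha>1$ and $\alpha\in(0,1)$ separately, since the monotonicity of $t\mapsto \frac{1}{\alpha-1}\log t$ flips) yields $D_\alpha(p\|q)\geq\mrDiv{\alpha}^\minimax(\channel{N}^{(2)}\|\channel{M}^{(2)})+D_\alpha(p_1\|q_1)$. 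Supremizing over $M^{(1)}$ converts $D_\alpha(p_1\|q_1)$ into $\mrDiv{\alpha}(\Phi_1(\sigma_1)\|\Psi_1(\omega_1))$ with $\sigma_1=\tr_{\sys{E}_2}\sigma_{12}$, which is $\geq\mrDiv{\alpha}^\minimax(\channel{N}^{(1)}\|\channel{M}^{(1)})$ by definition of the minimax channel divergence. Taking $\inf$ over $\sigma_{12},\omega_{12}$ then concludes the proof. The main obstacle will be the careful adjoint-channel identification that places $\tau_{2|x_1}$ in the range of $\Phi_2$ (and analogously $\tau'_{2|x_1}\in\mathrm{Range}(\Psi_2)$), together with handling the sign of $\alpha-1$ uniformly in the chain-rule step.
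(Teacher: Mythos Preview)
Your proposal is correct and shares the same overall architecture as the paper's proof: both restrict the outer supremum to product tester inputs $\proj{\rho^{(1)}}\tensor\proj{\rho^{(2)}}$, reduce to the induced single-input channels $\Phi_i,\Psi_i:\sys{E}_i\to\sys{B}_i\sys{R}_i$, and then invoke super-additivity of the worst-case measured R\'enyi divergence for tensor products. The paper dispatches this last step in one line by citing the chain rule from~\cite[Lemma~2]{fang2025adversarial}, whereas you re-derive it from scratch via an adaptive product POVM and the classical R\'enyi chain-rule identity. Your route is therefore more elementary and self-contained; the paper's is shorter but relies on an external lemma (whose proof is essentially what you sketched).

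One point of care: when $\channel{M}^{(2)}$ is only CP (not TP), the marginal $\tau'_1=\tr_{\sys{B}_2\sys{R}_2}\tau'_{12}$ is \emph{not} $\Psi_1(\tr_{\sys{E}_2}\omega_{12})$, so the factorization $q=q_1\cdot q_{2|x_1}$ should not be read as marginal--conditional. Your argument does go through if you instead factor at the $\sys{E}$-level, i.e., set $q_1(x_1)\defeq\tr\bigl[\Psi_1^\herm(M^{(1)}_{x_1})\,\tr_{\sys{E}_2}\omega_{12}\bigr]$ and $q_{2|x_1}(x_2)\defeq\tr\bigl[M^{(2),x_1}_{x_2}\Psi_2(\omega_{2|x_1})\bigr]$ with $\omega_{2|x_1}$ the normalized $\tilde\omega_{x_1}$; then $q_1 q_{2|x_1}=q$ algebraically (even though $q_{2|x_1}$ need not sum to $1$), the chain-rule identity still holds verbatim, and both factors are now divergences between images of \emph{density operators} under $\Phi_i,\Psi_i$, giving the desired lower bounds. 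As written, your line ``$\mrDiv{\alpha}(\Phi_1(\sigma_1)\|\Psi_1(\omega_1))$ with $\sigma_1=\tr_{\sys{E}_2}\sigma_{12}$'' is correct under this reading, but it would be worth stating the factorization explicitly to avoid confusion in the non-TP case.
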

\end{shaded}
\begin{proof}
The proof is done via the following chain of inequalities.
\begin{align}
\text{LHS of~\eqref{eq:superadditive:mrDiv}}
    \label{eq:superadditive:mrDiv:1}
    & = \sup_{\rho_{\syss{A}_1^2}} \iinfmrDiv{\alpha}\infdiv*{\left(\channel{N}^{(1)}\otimes \channel{N}^{(2)}\right)^{\rho_{\syss{A}_1^2}}_{\syss{E}_1^2\to\syss{B}_1^2}}{\left(\channel{M}^{(1)}\otimes \channel{M}^{(2)}\right)^{\rho_{\syss{A}_1^2}}_{\syss{E}_1^2\to\syss{B}_1^2}}\\
    \label{eq:superadditive:mrDiv:2}
    & \geq \sup_{\rho_{\sys{A}_1}, \rho_{\sys{A}_2}} \iinfmrDiv{\alpha}\infdiv*{\left(\channel{N}^{(1)}\otimes \channel{N}^{(2)}\right)^{\rho_{\sys{A}_1}\tensor\rho_{\sys{A}_2}}_{\syss{E}_1^2\to\syss{B}_1^2}}{\left(\channel{M}^{(1)}\otimes \channel{M}^{(2)}\right)^{\rho_{\sys{A}_1}\tensor\rho_{\sys{A}_2}}_{\syss{E}_1^2\to\syss{B}_1^2}}\\
    \label{eq:superadditive:mrDiv:3}
    & = \sup_{\rho_{\sys{A}_1}, \rho_{\sys{A}_2}} \iinfmrDiv{\alpha}\infdiv*{\channel{N}^{(1), \rho_{\sys{A}_1}}_{\sys{E}_1\to\sys{B}_1}\otimes \channel{N}^{(2), \rho_{\sys{A}_2}}_{\sys{E}_2\to\sys{B}_2}}{\channel{M}^{(1), \rho_{\sys{A}_1}}_{\sys{E}_1\to\sys{B}_1}\otimes \channel{M}^{(2), \rho_{\sys{A}_2}}_{\sys{E}_2\to\sys{B}_2}}\\
    \label{eq:superadditive:mrDiv:4}
    & \geq \sup_{\rho_{\sys{A}_1}, \rho_{\sys{A}_2}} \iinfmrDiv{\alpha}\infdiv*{\channel{N}^{(1), \rho_{\sys{A}_1}}_{\sys{E}_1\to\sys{B}_1}}{\channel{M}^{(1), \rho_{\sys{A}_1}}_{\sys{E}_1\to\sys{B}_1}} + \iinfmrDiv{\alpha}\infdiv*{\channel{N}^{(2), \rho_{\sys{A}_2}}_{\sys{E}_2\to\sys{B}_2}}{\channel{M}^{(2), \rho_{\sys{A}_2}}_{\sys{E}_2\to\sys{B}_2}}\\
    \label{eq:superadditive:mrDiv:5}
    & = \text{RHS of~\eqref{eq:superadditive:mrDiv}},
\end{align}
where~\eqref{eq:superadditive:mrDiv:1} and~\eqref{eq:superadditive:mrDiv:5} follow from Remark~\ref{rem: jammer divergence sup of worst case} and Lemma~\ref{lem: jammer divergence minimax}; \eqref{eq:superadditive:mrDiv:2} is a simple restriction on the feasible set; \eqref{eq:superadditive:mrDiv:3} follows directly from the definition of the induced channels; and \eqref{eq:superadditive:mrDiv:4} follows by applying twice the chain rule of the measured channel divergence~\cite[Lemma~2]{fang2025adversarial}.
\end{proof}
%*******************************************************************************
\begin{remark}\label{rem: lim sup DM alpha jam}
    As an immediate consequence of Lemma~\ref{lem:superadditive:mrDiv}, we have 
    \begin{align}
    \mrDiv{\alpha}^{\minimax,\infty}\infdiv*{\channel{N}_{\sys{AE}\to\sys{B}}}{\channel{M}_{\sys{AE}\to\sys{B}}} \defeq \lim_{n\to \infty} \frac{1}{n} \mrDiv{\alpha}^{\minimax}\infdiv*{\channel{N}_{\sys{AE}\to\sys{B}}^{\tensor n}}{\channel{M}_{\sys{AE}\to\sys{B}}^{\tensor n}}
    = \sup_{n\in\naturals} \frac{1}{n} \mrDiv{\alpha}^{\minimax}\infdiv*{\channel{N}_{\sys{AE}\to\sys{B}}^{\tensor n}}{\channel{M}_{\sys{AE}\to\sys{B}}^{\tensor n}},
    \end{align}
    where the second equality follows from Fekete's lemma.
\end{remark}
%*******************************************************************************
\begin{shaded}
\begin{lemma}(One-shot continuity.)\label{lem: DM alpha jam continuity}
For any $\channel{N}\in\CPTP(\sys{AE}:\sys{B})$ and $\channel{M}\in\CP(\sys{AE}:\sys{B})$, 
\begin{align}
    \sup_{\alpha \in (0,1)} \mrDiv{\alpha}^{\minimax}\infdiv*{\channel{N}_{\sys{AE}\to\sys{B}}}{\channel{M}_{\sys{AE}\to\sys{B}}} = \mDiv^{\minimax}\infdiv*{\channel{N}_{\sys{AE}\to\sys{B}}}{\channel{M}_{\sys{AE}\to\sys{B}}}.
\end{align}
\end{lemma}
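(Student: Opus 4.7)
The $\leq$ direction is immediate: for $\alpha \in (0,1)$, the classical Rényi divergence $D_\alpha$ is monotonically non-decreasing in $\alpha$ with $\sup_{\alpha \in (0,1)} D_\alpha = D$; taking the supremum over POVMs on both sides gives $\mrDiv{\alpha}\infdiv*{\rho}{\sigma} \leq \mDiv\infdiv*{\rho}{\sigma}$ pointwise, and this inequality is preserved when forming the $\sup\inf$ of Definition~\ref{def:minimaxDiv}.

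For the $\geq$ direction, I apply Remark~\ref{rem: jammer divergence sup of worst case} to rewrite, for $\Div \in \{\mrDiv{\alpha}, \mDiv\}$,
\begin{equation*}
\Div^{\minimax}\infdiv*{\channel{N}_{\sys{AE}\to\sys{B}}}{\channel{M}_{\sys{AE}\to\sys{B}}} = \sup_{\rho_{\sys{AR}}} \iinfDiv\infdiv*{\channel{N}^{\rho_{\sys{AR}}}_{\sys{E}\to\sys{B}}}{\channel{M}^{\rho_{\sys{AR}}}_{\sys{E}\to\sys{B}}},
\end{equation*}
and then exchange the two suprema on the left-hand side (always valid): $\sup_\alpha \mrDiv{\alpha}^{\minimax} = \sup_{\rho_{\sys{AR}}} \sup_\alpha \iinfmrDiv{\alpha}(\channel{N}^{\rho_{\sys{AR}}}_{\sys{E}\to\sys{B}}, \channel{M}^{\rho_{\sys{AR}}}_{\sys{E}\to\sys{B}})$. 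This reduces the task to proving, for each fixed $\rho_{\sys{AR}}$ and induced CP maps $\chi \defeq \channel{N}^{\rho_{\sys{AR}}}_{\sys{E}\to\sys{B}}$, $\xi \defeq \channel{M}^{\rho_{\sys{AR}}}_{\sys{E}\to\sys{B}}$, the one-shot continuity statement for the worst-case channel divergence:
\begin{equation*}
\sup_{\alpha \in (0,1)} \iinfmrDiv{\alpha}\infdiv*{\chi}{\xi} = \iinfmDiv\infdiv*{\chi}{\xi}.
\end{equation*}

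To establish this worst-case identity, I would use a compactness argument. Since $\density(\hilbert_\sys{E}) \times \density(\hilbert_\sys{E})$ is compact and $\mrDiv{\alpha}$ is lower semi-continuous, the infimum defining $\iinfmrDiv{\alpha}(\chi,\xi)$ is attained at some pair $(\sigma_\alpha, \omega_\alpha)$. Along any sequence $\alpha_n \uparrow 1$, compactness yields a subsequential limit $(\sigma_{\alpha_n}, \omega_{\alpha_n}) \to (\sigma^\star, \omega^\star)$. Monotonicity of $\mrDiv{\alpha}$ in $\alpha$, together with lower semi-continuity of $\mrDiv{\beta}$ in its state arguments, then gives, for every $\beta \in (0,1)$ and all sufficiently large $n$ so that $\alpha_n \geq \beta$,
\begin{equation*}
\sup_\alpha \iinfmrDiv{\alpha}\infdiv*{\chi}{\xi} = \lim_n \mrDiv{\alpha_n}\infdiv*{\chi(\sigma_{\alpha_n})}{\xi(\omega_{\alpha_n})} \geq \liminf_n \mrDiv{\beta}\infdiv*{\chi(\sigma_{\alpha_n})}{\xi(\omega_{\alpha_n})} \geq \mrDiv{\beta}\infdiv*{\chi(\sigma^\star)}{\xi(\omega^\star)},
\end{equation*}
where the first equality uses monotonicity of $\iinfmrDiv{\alpha}$ in $\alpha$ and the attained infima. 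Taking the supremum over $\beta \in (0,1)$ and using the state-level identity $\sup_{\beta \in (0,1)} \mrDiv{\beta} = \mDiv$ yields $\sup_\alpha \iinfmrDiv{\alpha}\infdiv*{\chi}{\xi} \geq \mDiv\infdiv*{\chi(\sigma^\star)}{\xi(\omega^\star)} \geq \iinfmDiv\infdiv*{\chi}{\xi}$, which closes the chain.

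The main obstacle is precisely the asymmetric minimax structure: naively one would want to interchange $\sup_\alpha$ with the infimum over jammer states $(\sigma,\omega)$ defining $\iinfmrDiv{\alpha}$, but no convex--concave structure in $(\alpha,\sigma,\omega)$ is available to justify such a swap via Sion's minimax theorem. Resolving this obstacle relies on the decoupling above: the compactness of the jammer state space and the monotonicity in $\alpha$ allow us to first send the states to a limit (with $\beta<1$ fixed, exploiting lower semi-continuity) and only afterwards push $\beta \uparrow 1$.
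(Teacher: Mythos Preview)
Your proposal is correct and follows essentially the same route as the paper: both rewrite $\Div^{\minimax}$ via Remark~\ref{rem: jammer divergence sup of worst case} as $\sup_{\rho_{\sys{AR}}}\iinfDiv$, exchange the two suprema, and reduce to the worst-case identity $\sup_{\alpha\in(0,1)}\iinfmrDiv{\alpha}=\iinfmDiv$ for the induced channels. The only difference is that the paper dispatches this last step by citing \cite[Lemma~22]{fang2024generalized} applied to the image sets of $\channel{N}^{\rho_{\sys{AR}}}$ and $\channel{M}^{\rho_{\sys{AR}}}$, whereas you unpack it with an explicit compactness/lower-semicontinuity argument; your argument is precisely the kind of proof that underlies that cited lemma.
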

\end{shaded}
\begin{proof}
    The proof is done via the following chain of equalities.
    \begin{align}
        \label{eq:DM alpha jam continuity:1}
        \sup_{\alpha \in (0,1)} \mrDiv{\alpha}^{\minimax}\infdiv*{\channel{N}_{\sys{AE}\to\sys{B}}}{\channel{M}_{\sys{AE}\to\sys{B}}} & = \sup_{\alpha \in (0,1)} \sup_{\rho_\sys{A} \in \density(\hilbert_\sys{A})} \iinfmrDiv{\alpha}\infdiv*{\channel{N}^{\rho_\sys{A}}_{\sys{E}\to\sys{B}}}{\channel{M}^{\rho_\sys{A}}_{\sys{E}\to\sys{B}}}\\
        \label{eq:DM alpha jam continuity:2}
        & =  \sup_{\rho_\sys{A} \in \density(\hilbert_\sys{A})} \sup_{\alpha \in (0,1)} \iinfmrDiv{\alpha}\infdiv*{\channel{N}^{\rho_\sys{A}}_{\sys{E}\to\sys{B}}}{\channel{M}^{\rho_\sys{A}}_{\sys{E}\to\sys{B}}}\\
        \label{eq:DM alpha jam continuity:3}
        & = \sup_{\rho_\sys{A} \in \density(\hilbert_\sys{A})} \iinfmDiv\infdiv*{\channel{N}^{\rho_\sys{A}}_{\sys{E}\to\sys{B}}}{\channel{M}^{\rho_\sys{A}}_{\sys{E}\to\sys{B}}}\\
        \label{eq:DM alpha jam continuity:4}
        & = \mDiv^{\minimax}\infdiv*{\channel{N}_{\sys{AE}\to\sys{B}}}{\channel{M}_{\sys{AE}\to\sys{B}}},
    \end{align}
    where~\eqref{eq:DM alpha jam continuity:1} and~\eqref{eq:DM alpha jam continuity:4} follow from Remark~\ref{rem: jammer divergence sup of worst case}; \eqref{eq:DM alpha jam continuity:3} follows from the continuity of the worst-case channel divergence~\cite[Lemma~22]{fang2024generalized} applied to the image sets of the induced channels.
\end{proof}
%*******************************************************************************
\begin{shaded}
\begin{lemma}(Asymptotic continuity.)\label{lem: DM alpha jam infty continuity}
For any $\channel{N}\in\CPTP(\sys{AE}:\sys{B})$ and $\channel{M}\in\CP(\sys{AE}:\sys{B})$, 
\begin{align}
    \sup_{\alpha \in (0,1)} \mrDiv{\alpha}^{\minimax,\infty}\infdiv*{\channel{N}_{\sys{AE}\to\sys{B}}}{\channel{M}_{\sys{AE}\to\sys{B}}}
    = \mDiv^{\minimax,\infty}\infdiv*{\channel{N}_{\sys{AE}\to\sys{B}}}{\channel{M}_{\sys{AE}\to\sys{B}}}.
\end{align}
\end{lemma}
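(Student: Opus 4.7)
The plan is to lift the one-shot continuity identity $\sup_\alpha \mrDiv{\alpha}^{\minimax} = \mDiv^{\minimax}$ of Lemma~\ref{lem: DM alpha jam continuity} to the regularized level. The crucial ingredient is Remark~\ref{rem: lim sup DM alpha jam}, which (via super-additivity and Fekete's lemma) provides the \emph{uniform-in-$n$} upper bound
\[
\tfrac{1}{n}\mrDiv{\alpha}^{\minimax}\infdiv*{\channel{N}_{\sys{AE}\to\sys{B}}^{\tensor n}}{\channel{M}_{\sys{AE}\to\sys{B}}^{\tensor n}} \leq \mrDiv{\alpha}^{\minimax,\infty}\infdiv*{\channel{N}_{\sys{AE}\to\sys{B}}}{\channel{M}_{\sys{AE}\to\sys{B}}} \qquad \forall\, n\in\naturals,\ \alpha\in(0,1).
\]

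For the ``$\leq$'' direction of the desired equality, fix $\alpha\in(0,1)$. Lemma~\ref{lem: DM alpha jam continuity} implies $\mrDiv{\alpha}^{\minimax} \leq \mDiv^{\minimax}$ pointwise; applying this to $(\channel{N}^{\tensor n},\channel{M}^{\tensor n})$, dividing by $n$, and passing to $\liminf_{n\to\infty}$ yields
\[
 \mrDiv{\alpha}^{\minimax,\infty}\infdiv*{\channel{N}_{\sys{AE}\to\sys{B}}}{\channel{M}_{\sys{AE}\to\sys{B}}} \leq \liminf_{n\to\infty} \tfrac{1}{n}\mDiv^{\minimax}\infdiv*{\channel{N}_{\sys{AE}\to\sys{B}}^{\tensor n}}{\channel{M}_{\sys{AE}\to\sys{B}}^{\tensor n}}.
\]
Taking $\sup_\alpha$ on the left gives $\sup_\alpha \mrDiv{\alpha}^{\minimax,\infty} \leq \liminf_n \tfrac{1}{n}\mDiv^{\minimax}(\channel{N}^{\tensor n}\|\channel{M}^{\tensor n})$.

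For the ``$\geq$'' direction, applying Lemma~\ref{lem: DM alpha jam continuity} directly to the $n$-fold channels gives
\[
 \tfrac{1}{n}\mDiv^{\minimax}\infdiv*{\channel{N}_{\sys{AE}\to\sys{B}}^{\tensor n}}{\channel{M}_{\sys{AE}\to\sys{B}}^{\tensor n}} = \sup_{\alpha\in(0,1)} \tfrac{1}{n} \mrDiv{\alpha}^{\minimax}\infdiv*{\channel{N}_{\sys{AE}\to\sys{B}}^{\tensor n}}{\channel{M}_{\sys{AE}\to\sys{B}}^{\tensor n}}.
\]
Inserting the uniform bound from Remark~\ref{rem: lim sup DM alpha jam} inside the supremum majorizes the right-hand side by $\sup_{\alpha}\mrDiv{\alpha}^{\minimax,\infty}$ for every $n$, so that $\limsup_n \tfrac{1}{n}\mDiv^{\minimax}(\channel{N}^{\tensor n}\|\channel{M}^{\tensor n}) \leq \sup_{\alpha}\mrDiv{\alpha}^{\minimax,\infty}$. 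Sandwiching $\liminf$ and $\limsup$ between equal quantities establishes both the existence of the limit $\mDiv^{\minimax,\infty}$ and its identity with $\sup_\alpha \mrDiv{\alpha}^{\minimax,\infty}$.

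The step I expect to scrutinize most carefully is the exchange of $\sup_\alpha$ with the regularization in $n$. This is not automatic, because $\mDiv^{\minimax}$ is not directly known to be super-additive in the tensor power (the supremum of a family of super-additive functionals is not generally super-additive), so Fekete's lemma is unavailable for $\mDiv^{\minimax}$ itself. The resolution is precisely the uniform-in-$n$ domination of Remark~\ref{rem: lim sup DM alpha jam}, which holds for each individual $\alpha$ and hence survives the supremum; this is why the proof routes through the \Renyi quantities rather than attempting a direct Fekete argument on $\mDiv^{\minimax}$.
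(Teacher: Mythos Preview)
Your proof is correct and essentially the same as the paper's: both combine Remark~\ref{rem: lim sup DM alpha jam} (which turns $\lim_n$ into $\sup_n$ via Fekete) with Lemma~\ref{lem: DM alpha jam continuity}, and the core step is the trivial swap $\sup_\alpha\sup_n=\sup_n\sup_\alpha$. Your liminf/limsup presentation unpacks that swap more carefully and, as a bonus, explicitly establishes that the limit defining $\mDiv^{\minimax,\infty}$ exists, which the paper's final equality asserts without elaboration.
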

\end{shaded}
\begin{proof}
    The proof is done via Remark~\ref{rem: lim sup DM alpha jam} and Lemma~\ref{lem: DM alpha jam continuity} as follows.
    \begin{align}
        \sup_{\alpha \in (0,1)} \mrDiv{\alpha}^{\minimax,\infty}\infdiv*{\channel{N}_{\sys{AE}\to\sys{B}}}{\channel{M}_{\sys{AE}\to\sys{B}}} 
        & = \sup_{\alpha \in (0,1)} \sup_{n\in\naturals} \frac{1}{n} \mrDiv{\alpha}^{\minimax}\infdiv*{\channel{N}_{\sys{AE}\to\sys{B}}^{\tensor n}}{\channel{M}_{\sys{AE}\to\sys{B}}^{\tensor n}} \\
        & = \sup_{n\in\naturals}\ \frac{1}{n} \sup_{\alpha \in (0,1)} \mrDiv{\alpha}^{\minimax}\infdiv*{\channel{N}_{\sys{AE}\to\sys{B}}^{\tensor n}}{\channel{M}_{\sys{AE}\to\sys{B}}^{\tensor n}} \\
        & = \sup_{n\in\naturals}\ \frac{1}{n} \mDiv^{\minimax}\infdiv*{\channel{N}_{\sys{AE}\to\sys{B}}^{\tensor n}}{\channel{M}_{\sys{AE}\to\sys{B}}^{\tensor n}}\\
        & = \mDiv^{\minimax,\infty}\infdiv*{\channel{N}_{\sys{AE}\to\sys{B}}}{\channel{M}_{\sys{AE}\to\sys{B}}}. \qedhere
    \end{align}
\end{proof}
%*******************************************************************************
To finish this section, we establish the asymptotic equivalence of the measured minimax channel  divergence and the Umegaki minimax channel  divergence as follows.
\begin{shaded}
\begin{lemma}(Asymptotic equivalence.)\label{lem: DM D jam infty}
For any $\channel{N}\in\CPTP(\sys{AE}:\sys{B})$ and $\channel{M}\in\CP(\sys{AE}:\sys{B})$, 
    \begin{align}
        \mDiv^{\minimax,\infty}\infdiv*{\channel{N}_{\sys{AE}\to\sys{B}}}{\channel{M}_{\sys{AE}\to\sys{B}}} = \uDiv^{\minimax,\infty}\infdiv*{\channel{N}_{\sys{AE}\to\sys{B}}}{\channel{M}_{\sys{AE}\to\sys{B}}}.
    \end{align}
\end{lemma}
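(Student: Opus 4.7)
The plan is to sandwich the two minimax divergences at tensor power $n$ via
\[
    \mDiv^{\minimax}\infdiv*{\channel{N}_{\sys{AE}\to\sys{B}}^{\tensor n}}{\channel{M}_{\sys{AE}\to\sys{B}}^{\tensor n}}
    \leq \uDiv^{\minimax}\infdiv*{\channel{N}_{\sys{AE}\to\sys{B}}^{\tensor n}}{\channel{M}_{\sys{AE}\to\sys{B}}^{\tensor n}}
    \leq \mDiv^{\minimax}\infdiv*{\channel{N}_{\sys{AE}\to\sys{B}}^{\tensor n}}{\channel{M}_{\sys{AE}\to\sys{B}}^{\tensor n}} + O(\log n),
\]
divide by $n$, and send $n\to\infty$. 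The left inequality is immediate: the pointwise bound $\mDiv\infdiv*{\rho}{\sigma}\leq \uDiv\infdiv*{\rho}{\sigma}$ from Fact~\ref{fact: DM and Sandwiched relation} survives the sup--inf in Definition~\ref{def:minimaxDiv}, and hence after regularization one obtains $\mDiv^{\minimax,\infty}\leq \uDiv^{\minimax,\infty}$.

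For the right inequality the idea is to combine the symmetry reduction of Lemma~\ref{lem:mimmaxDiv:perm} with the polynomial spectral bound of Fact~\ref{fact: permutation invariant spec}. I would first invoke Eq.~\eqref{eq: minimax symmetry reduction 2} to write both $\uDiv^{\minimax}$ and $\mDiv^{\minimax}$ at tensor power $n$ as an inf over permutation-invariant jammer states $\sigma_{\syss{E}_1^n},\omega_{\syss{E}_1^n}\in\PER(\hilbert_\sys{E}^{\tensor n})$ and a sup over permutation-invariant inputs $\rho_{\syss{A}_1^n}\in\PER(\hilbert_\sys{A}^{\tensor n})$. For each such permutation-invariant $\rho_{\syss{A}_1^n}$, choose the canonical purification $\ket{\rho}_{\syss{A}_1^n\syss{R}_1^n}$ (\ie, $\rho_{\syss{A}_1^n}^{1/2}$ applied to a maximally entangled vector on $\syss{A}_1^n\syss{R}_1^n$ with $\hilbert_\sys{R}=\hilbert_\sys{A}$), so that $\proj{\rho}_{\syss{A}_1^n\syss{R}_1^n}$ is invariant under the diagonal action $\pi_{\syss{A}_1^n}\tensor\pi_{\syss{R}_1^n}$ for every $\pi\in\set{S}_n$. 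A short calculation analogous to the one in the proof of Lemma~\ref{lem:mimmaxDiv:perm} (carrying permutations over the channel from output to input) then shows that
\[
    \channel{M}_{\sys{AE}\to\sys{B}}^{\tensor n}\bigl(\proj{\rho}_{\syss{A}_1^n\syss{R}_1^n}\tensor\omega_{\syss{E}_1^n}\bigr)
\]
is permutation-invariant on $(\hilbert_\sys{B}\tensor\hilbert_\sys{R})^{\tensor n}$, so Fact~\ref{fact: permutation invariant spec} bounds its number of distinct eigenvalues by $\poly(n)$, with the polynomial depending only on $d_\sys{A} d_\sys{B}$ and not on the particular triple $(\rho,\sigma,\omega)$.

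Applying the right-hand inequality of Fact~\ref{fact: DM and Sandwiched relation} to every admissible triple and using the uniform spectral bound just established, the additive gap $2\log\size{\spec(\cdot)}\leq O(\log n)$ passes through the sup and inf, which gives the desired $\uDiv^{\minimax}\leq \mDiv^{\minimax}+O(\log n)$ at tensor power $n$. Dividing by $n$ and taking the limit yields $\uDiv^{\minimax,\infty}\leq \mDiv^{\minimax,\infty}$; existence of the limit defining $\mDiv^{\minimax,\infty}$ is provided by Remark~\ref{rem: lim sup DM alpha jam} together with Lemma~\ref{lem: DM alpha jam infty continuity}, and existence for $\uDiv^{\minimax,\infty}$ then follows automatically from the two-sided sandwich. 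The main technical obstacle is the middle step, namely verifying that a symmetric purification can be chosen so that the output operator is truly permutation-invariant on the joint $\syss{B}_1^n\syss{R}_1^n$ system in the sense required by Fact~\ref{fact: permutation invariant spec}; once this bookkeeping is handled, the remainder of the argument is routine.
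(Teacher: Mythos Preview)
Your proposal is correct and follows essentially the same route as the paper: the easy inequality $\mDiv^{\minimax}\leq\uDiv^{\minimax}$ pointwise, then the symmetry reduction of Lemma~\ref{lem:mimmaxDiv:perm} combined with Fact~\ref{fact: permutation invariant spec} and Fact~\ref{fact: DM and Sandwiched relation} to control the reverse gap by $O(\log n)$, followed by regularization. Your treatment of the purification step---choosing the canonical purification so that $\proj{\rho}_{\syss{A}_1^n\syss{R}_1^n}$ is invariant under the diagonal permutation action, thereby making the output genuinely permutation-invariant on $(\hilbert_\sys{B}\tensor\hilbert_\sys{R})^{\tensor n}$---is in fact more explicit than the paper, which asserts this permutation invariance without spelling out the choice of purifier.
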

\end{shaded}
\begin{proof}
    Since that $\mDiv\infdiv{\cdot}{\cdot} \leq \uDiv\infdiv{\cdot}{\cdot}$ in general, it is clear that \begin{align}\mDiv^{\minimax,\infty}\infdiv*{\channel{N}_{\sys{AE}\to\sys{B}}}{\channel{M}_{\sys{AE}\to\sys{B}}} \leq \uDiv^{\minimax,\infty}\infdiv*{\channel{N}_{\sys{AE}\to\sys{B}}}{\channel{M}_{\sys{AE}\to\sys{B}}}.
    \end{align}
    It suffices to show the other direction.
    Starting with Remark~\ref{rem: lim sup DM alpha jam} and Lemma~\ref{lem: DM alpha jam continuity}, we have 
    \begin{align}
        &\hspace{15pt} \mDiv^{\minimax,\infty}\infdiv*{\channel{N}_{\sys{AE}\to\sys{B}}}{\channel{M}_{\sys{AE}\to\sys{B}}} \nonumber \\
        &= \sup_{n\in\naturals} \frac{1}{n} \adjustlimits \inf_{\substack{\sigma_{\syss{E}_1^n},\,\omega_{\syss{E}_1^n} \\ \in \density(\hilbert_\sys{E}^{\tensor n})}} \sup_{\substack{\rho_{\syss{A}_1^n} \in \\ \density(\hilbert_\sys{A}^{\tensor n})}} \mDiv^{\minimax}\infdiv*{\channel{N}_{\sys{AE}\to\sys{B}}^{\tensor n}(\proj{\rho}_{\syss{A}_1^n\syss{R}_1^n}\tensor\sigma_{\syss{E}_1^n})}{\channel{M}_{\sys{AE}\to\sys{B}}^{\tensor n}(\proj{\rho}_{\syss{A}_1^n\syss{R}_1^n}\tensor\omega_{\syss{E}_1^n})} \\
        \label{eq: DM D jam infty:3}
        &= \sup_{n\in\naturals} \frac{1}{n} \adjustlimits \inf_{\substack{\sigma_{\syss{E}_1^n},\,\omega_{\syss{E}_1^n} \\ \in \PER(\hilbert_\sys{E}^{\tensor n})}} \sup_{\substack{\rho_{\syss{A}_1^n} \in \\ \PER(\hilbert_\sys{A}^{\tensor n})}} \mDiv^{\minimax}\infdiv*{\channel{N}_{\sys{AE}\to\sys{B}}^{\tensor n}(\proj{\rho}_{\syss{A}_1^n\syss{R}_1^n}\tensor\sigma_{\syss{E}_1^n})}{\channel{M}_{\sys{AE}\to\sys{B}}^{\tensor n}(\proj{\rho}_{\syss{A}_1^n\syss{R}_1^n}\tensor\omega_{\syss{E}_1^n})} 
    \end{align}
    where, for~\eqref{eq: DM D jam infty:3}, we have used Lemma~\ref{lem:mimmaxDiv:perm}.
    In this case, the state $\channel{M}_{\sys{AE}\to\sys{B}}^{\tensor n}(\proj{\rho}_{\syss{A}_1^n\syss{R}_1^n}\tensor\omega_{\syss{E}_1^n})$ is permutation invariant.
    By Fact~\ref{fact: permutation invariant spec}, we have 
    \begin{align}
        \size{\spec\left(\channel{M}_{\sys{AE}\to\sys{B}}^{\tensor n}(\proj{\rho}_{\syss{A}_1^n\syss{R}_1^n}\tensor\omega_{\syss{E}_1^n})\right)} \leq (n+1)^d(n+d)^{d^2},
    \end{align}
    where $d = \dim(\hilbert_\sys{R}\tensor\hilbert_\sys{B}) = \dim(\hilbert_\sys{A}\tensor\hilbert_\sys{B}) = \dim(\hilbert_\sys{A})\cdot\dim(\hilbert_\sys{B})$.
    Apply Fact~\ref{fact: DM and Sandwiched relation}, we have 
    \begin{align}
        \eqref{eq: DM D jam infty:3} & \geq
        \sup_{n\in\naturals} \frac{1}{n} \adjustlimits \inf_{\substack{\sigma_{\syss{E}_1^n},\,\omega_{\syss{E}_1^n} \\ \in \PER(\hilbert_\sys{E}^{\tensor n})}} \sup_{\substack{\rho_{\syss{A}_1^n} \in \\ \PER(\hilbert_\sys{A}^{\tensor n})}} \\
        & \bigg\{\uDiv^{\minimax}\infdiv*{\channel{N}_{\sys{AE}\to\sys{B}}^{\tensor n}(\proj{\rho}_{\syss{A}_1^n\syss{R}_1^n}\tensor\sigma_{\syss{E}_1^n})}{\channel{M}_{\sys{AE}\to\sys{B}}^{\tensor n}(\proj{\rho}_{\syss{A}_1^n\syss{R}_1^n}\tensor\omega_{\syss{E}_1^n})} - 2 \log (n+1)^d(n+d)^{d^2} \bigg\},\notag
    \end{align}
    which concludes the proof, as the logarithm term vanishes as $n\to\infty$.
\end{proof}

%*******************************************************************************
%*******************************************************************************
\section{Quantum channel discrimination against jammers} \label{sec:parallel:discrimination}
%*******************************************************************************
Recall from the introduction, we are interested in the task of channel discrimination between two channels with jammers $\channel{N}_{\sys{AE}\to\sys{B}},\,\channel{M}_{\sys{AE}\to\sys{B}} \in \CPTP(\sys{AE}:\sys{B})$, where $\sys{A}$ and $\sys{B}$ are the input and the output system, respectively, and $\sys{E}$ is the adversary/jammer system.
The type-I and type-II error are described by 
\begin{align}
\alpha\infdivc*{\channel{N}_{\sys{AE}\to\sys{B}}}{\channel{M}_{\sys{AE}\to\sys{B}}}{\rho_\sys{AR}} &\defeq \sup_{\sigma_\sys{E}\in\density(\hilbert_\sys{E})} \tr\big( \channel{N}_{\sys{AE}\to\sys{B}}(\rho_\sys{AR}\tensor\sigma_\sys{E})\cdot (I_\sys{B} - M) \big), \tag{\ref{eq:def:type-1:error}}\\
\beta\infdivc*{\channel{N}_{\sys{AE}\to\sys{B}}}{\channel{M}_{\sys{AE}\to\sys{B}}}{\rho_\sys{AR}} &\defeq \sup_{\sigma_\sys{E}\in\density(\hilbert_\sys{E})} \tr\big( \channel{M}_{\sys{AE}\to\sys{B}}(\rho_\sys{AR}\tensor\sigma_\sys{E})\cdot M \big), \tag{\ref{eq:def:type-2:error}}
\end{align}
respectively, where input preparation $\rho_\sys{AR}\in\density(\hilbert_\sys{A}\tensor\hilbert_\sys{R})$ and measurement $0\mle M \mle I_\sys{B}$ on the output system fully describes the discriminating strategy, and where $\sys{R}$ is some auxiliary system.
The optimal type-II error probability is defined as 
\begin{equation}
\beta^{\sys{E}}_\epsilon\infdiv*{\channel{N}_{\sys{AE}\to\sys{B}}}{\channel{M}_{\sys{AE}\to\sys{B}}} \defeq
\inf_{\substack{0\mle M \mle I_\sys{B}\\ \rho_\sys{AR} \in \density(\hilbert_\sys{A}\tensor\hilbert_\sys{R})}}
\Big\{ \beta\infdivc*{\channel{N}_{\sys{AE}\to\sys{B}}}{\channel{M}_{\sys{AE}\to\sys{B}}}{\rho_\sys{AR}} 
    \Big\vert
    \alpha\infdivc*{\channel{N}_{\sys{AE}\to\sys{B}}}{\channel{M}_{\sys{AE}\to\sys{B}}}{\rho_\sys{AR}} \leq \epsilon \Big\},
    \tag{\ref{eq:def:opt:type-2:error}}
\end{equation}
and we are interested in the error exponent
\begin{equation}
\lim_{n\to\infty} -\frac{1}{n}\log{\beta^{\sys{E}}_\epsilon\infdiv*{\channel{N}_{\sys{AE}\to\sys{B}}^{\tensor n}}{\channel{M}_{\sys{AE}\to\sys{B}}^{\tensor n}}} =\ ? \tag{\ref{eq:def:error:exponent}}
\end{equation}

%*******************************************************************************
As stated at the beginning of Section~\ref{sec:divergence}, we link the operational quantity as in~\eqref{eq:def:opt:type-2:error} to the minimax channel divergence as in the following proposition.
\begin{shaded}
\begin{proposition}\label{prop: beta DH relation jammer}
For any $\channel{N}\in\CPTP(\sys{AE}:\sys{B})$ and $\channel{M}\in\CP(\sys{AE}:\sys{B})$,
\begin{align}
    -\log{\beta^{\sys{E}}_\epsilon\infdiv*{\channel{N}_{\sys{AE}\to\sys{B}}}{\channel{M}_{\sys{AE}\to\sys{B}}}}  = 
    \hDiv{\epsilon}^{\minimax}\infdiv*{\channel{N}_{\sys{AE}\to\sys{B}}}{\channel{M}_{\sys{AE}\to\sys{B}}}.
\end{align}
\end{proposition}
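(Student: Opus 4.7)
The plan is to combine Remark~\ref{rem: jammer divergence sup of worst case}, which expresses the minimax channel divergence as a supremum (over tester inputs) of worst-case channel divergences of induced channels, with the one-shot Stein-type identity in the purely worst-case setting of~\cite{fang2025adversarial}. For any fixed $\rho_\sys{AR} \in \density(\hilbert_\sys{A}\tensor\hilbert_\sys{R})$, define the induced single-input channels $\channel{N}^{\rho_\sys{AR}}_{\sys{E}\to\sys{B}}(\cdot) \defeq \channel{N}_{\sys{AE}\to\sys{B}}(\rho_\sys{AR}\tensor\cdot)$ and $\channel{M}^{\rho_\sys{AR}}_{\sys{E}\to\sys{B}}$ analogously. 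With this notation, the type-I and type-II errors in~\eqref{eq:def:type-1:error}--\eqref{eq:def:type-2:error} are precisely the worst-case errors of the induced channels with the jammer playing states on $\sys{E}$, so the outer $\inf$ over $\rho_\sys{AR}$ in~\eqref{eq:def:opt:type-2:error} factors out and it suffices, for each fixed $\rho_\sys{AR}$, to prove the one-shot identity
\begin{equation*}
-\log \inf_{0 \mle M \mle I_\sys{B}} \Big\{\sup_{\omega_\sys{E}} \tr[\channel{M}^{\rho_\sys{AR}}(\omega_\sys{E})M] \Big\vert \sup_{\sigma_\sys{E}} \tr[\channel{N}^{\rho_\sys{AR}}(\sigma_\sys{E})(I-M)] \leq \epsilon\Big\} = \inf_{\sigma_\sys{E},\omega_\sys{E}} \hDiv{\epsilon}\infdiv*{\channel{N}^{\rho_\sys{AR}}(\sigma_\sys{E})}{\channel{M}^{\rho_\sys{AR}}(\omega_\sys{E})},
\end{equation*}
which is exactly the one-shot worst-case identity of~\cite{fang2025adversarial} applied to the induced channels.

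After establishing the above for each $\rho_\sys{AR}$, applying $-\log$ to $\beta^\sys{E}_\epsilon\infdiv*{\channel{N}}{\channel{M}} = \inf_{\rho_\sys{AR}} \tilde\beta_\epsilon(\rho_\sys{AR})$ (with $\tilde\beta_\epsilon(\rho_\sys{AR})$ the inner $M$-optimization) converts the outer infimum to a supremum:
\begin{equation*}
-\log \beta^\sys{E}_\epsilon\infdiv*{\channel{N}_{\sys{AE}\to\sys{B}}}{\channel{M}_{\sys{AE}\to\sys{B}}} = \sup_{\rho_\sys{AR}} \inf_{\sigma_\sys{E},\omega_\sys{E}} \hDiv{\epsilon}\infdiv*{\channel{N}^{\rho_\sys{AR}}(\sigma_\sys{E})}{\channel{M}^{\rho_\sys{AR}}(\omega_\sys{E})}.
\end{equation*}
By Remark~\ref{rem: jammer divergence sup of worst case} specialized to $\Div = \hDiv{\epsilon}$, the right-hand side is precisely $\hDiv{\epsilon}^\minimax\infdiv*{\channel{N}_{\sys{AE}\to\sys{B}}}{\channel{M}_{\sys{AE}\to\sys{B}}}$, completing the proof.

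The main obstacle is the one-shot worst-case identity itself, which is a minimax equality between a min-max game (the tester commits the measurement first, then the adversary reacts) and a max-min game (the adversary commits first). The direction $\geq$ is immediate from the inclusion of feasibility sets. For $\leq$, if one wishes to give a self-contained derivation rather than citing~\cite{fang2025adversarial}, the approach is to first apply Sion's minimax theorem to interchange $\inf_M$ with $\sup_{\omega_\sys{E}}$ in the objective (bilinear over convex-compact sets, with the $M$-feasibility region independent of $\omega_\sys{E}$), and then handle the semi-infinite $\sigma_\sys{E}$ constraint via Lagrangian duality: linearity of the type-I error in $\sigma_\sys{E}$ ensures that the optimal Lagrange multiplier is supported on a single $\sigma_\sys{E}^\star$, which recovers the single-constraint dual and brings $\sup_{\sigma_\sys{E}}$ to the outermost position.
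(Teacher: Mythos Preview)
Your proposal is correct and follows essentially the same route as the paper: factor out the infimum over $\rho_\sys{AR}$, then identify the inner problem as the one-shot worst-case hypothesis testing identity between the (convex) image sets of the induced channels $\channel{N}^{\rho_\sys{AR}}_{\sys{E}\to\sys{B}}$ and $\channel{M}^{\rho_\sys{AR}}_{\sys{E}\to\sys{B}}$. The paper cites \cite[Lemma~31]{fang2024generalized} for that identity (a minimax result for hypothesis testing between two convex sets) rather than \cite{fang2025adversarial}, but these are the same underlying fact, and your optional self-contained sketch via Sion plus semi-infinite Lagrangian duality is a reasonable outline of its proof.
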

\end{shaded}
\begin{proof}
By definition~\ref{def:minimaxDiv}, we can write $\hDiv{\epsilon}^{\minimax}$ as 
\begin{align}
    \hDiv{\epsilon}^{\minimax}&\infdiv*{\channel{N}_{\sys{AE}\to\sys{B}}}
    {\channel{M}_{\sys{AE}\to\sys{B}}}\notag\\
    & \defeq \adjustlimits \sup_{\rho_\sys{AR} \in \density(\hilbert_\sys{AR}\tensor\hilbert_\sys{R})} \inf_{\sigma_\sys{E},\,\omega_\sys{E} \in \density(\hilbert_\sys{E})} \hDiv{\epsilon}\infdiv*{\channel{N}_{\sys{AE}\to\sys{B}}(\rho_\sys{AR}\tensor\sigma_\sys{E})}{\channel{M}_{\sys{AE}\to\sys{B}}(\rho_\sys{AR}\tensor\omega_\sys{E})}\\
    & = -\log{\adjustlimits \inf_{\rho_\sys{AR} \in \density(\hilbert_\sys{A}\tensor\hilbert_\sys{R})} \sup_{\sigma_\sys{E},\,\omega_\sys{E} \in \density(\hilbert_\sys{E})} \beta_{\epsilon}\infdiv*{\channel{N}_{\sys{AE}\to\sys{B}}(\rho_\sys{AR}\tensor\sigma_\sys{E})}{\channel{M}_{\sys{AE}\to\sys{B}}(\rho_\sys{AR}\tensor\omega_\sys{E})}}.\label{eq: beta DH relation jammer tmp1}
\end{align}

Denote 
\begin{align}
    \beta^{\sys{E}}_\epsilon\infdivc*{\channel{N}_{\sys{AE}\to\sys{B}}}{\channel{M}_{\sys{AE}\to\sys{B}}}{\rho_{\sys{AR}}} \defeq
    \inf_{0\mle M \mle I_\sys{B}}
    \Big\{ \beta\infdivc*{\channel{N}_{\sys{AE}\to\sys{B}}}{\channel{M}_{\sys{AE}\to\sys{B}}}{\rho_\sys{AR}} 
    \Big\vert
    \alpha\infdivc*{\channel{N}_{\sys{AE}\to\sys{B}}}{\channel{M}_{\sys{AE}\to\sys{B}}}{\rho_\sys{AR}} \leq \epsilon \Big\}.
\end{align}
Immediately we have 
\begin{align}
    \beta^{\sys{E}}_\epsilon\infdiv*{\channel{N}_{\sys{AE}\to\sys{B}}}{\channel{M}_{\sys{AE}\to\sys{B}}}
    = \inf_{\rho_\sys{AR} \in \density(\hilbert_\sys{A}\tensor\hilbert_\sys{R})} \beta^{\sys{E}}_\epsilon\infdivc*{\channel{N}_{\sys{AE}\to\sys{B}}}{\channel{M}_{\sys{AE}\to\sys{B}}}{\rho_{\sys{AR}}}.\label{eq:beta DH relation jammer:1}
\end{align}

Now, let $\set{A}^{\rho_{\sys{AR}}}$ and $\set{B}^{\rho_{\sys{AR}}}$ denote the image of the induced channels $\channel{N}_{\sys{E}\to\sys{B}}^{\rho_{\sys{AR}}}$ and $\channel{M}_{\sys{E}\to\sys{B}}^{\rho_{\sys{AR}}}$ (see Remark~\ref{rem: jammer divergence sup of worst case} for the notation of the induced channels), respectively , \ie,
\begin{align}
    \set{A}^{\rho_{\sys{AR}}} &\defeq \left\{\channel{N}_{\sys{AE}\to\sys{B}}(\rho_{\sys{AR}}\tensor\varrho_\sys{E}) \middle\vert \varrho_\sys{E}\in\density(\hilbert_\sys{E}) \right\}, \\
    \set{B}^{\rho_{\sys{AR}}} &\defeq \left\{\channel{M}_{\sys{AE}\to\sys{B}}(\rho_{\sys{AR}}\tensor\varrho_\sys{E}) \middle\vert \varrho_\sys{E}\in\density(\hilbert_\sys{E}) \right\}.
\end{align}
It is straightforward that these two sets are convex; and moreover, we have 
\begin{align}
    \alpha\infdivc*{\channel{N}_{\sys{AE}\to\sys{B}}}{\channel{M}_{\sys{AE}\to\sys{B}}}{\rho_\sys{AR}} &= \alpha(\set{A}^{\rho_{\sys{AR}}},M) &&\text{ where } \alpha(\set{D},M)\defeq \sup\nolimits_{\varrho\in\set{D}} \tr\left(\varrho\cdot(I-M)\right),\\
    \beta\infdivc*{\channel{N}_{\sys{AE}\to\sys{B}}}{\channel{M}_{\sys{AE}\to\sys{B}}}{\rho_\sys{AR}} &= \beta(\set{B}^{\rho_{\sys{AR}}},M) &&\text{ where } \beta(\set{D},M)\defeq \sup\nolimits_{\varrho\in\set{D}} \tr\left(\varrho\cdot M\right).
\end{align}
Therefore,  
\begin{align}
    \beta^{\sys{E}}_\epsilon\infdivc*{\channel{N}_{\sys{AE}\to\sys{B}}}{\channel{M}_{\sys{AE}\to\sys{B}}}{\rho_{\sys{AR}}} 
    & = \inf_{0\mle M \mle I_\sys{B}} \left\{ \beta(\set{B}^{\rho_{\sys{AR}}},M) \middle\vert \alpha(\set{A}^{\rho_{\sys{AR}}},M)\leq\epsilon \right\} \\
    \label{eq:beta DH relation jammer:2}
    & = \sup_{\bar \rho \in \set{A}^{\rho_{\sys{AR}}}, \bar \sigma \in \set{B}^{\rho_{\sys{AR}}}} \beta_\epsilon\infdiv*{\bar \rho}{\bar \sigma}\\
    & = \sup_{\sigma_\sys{E},\,\omega_\sys{E} \in \density(\hilbert_\sys{E})} \beta_{\epsilon}\infdiv*{\channel{N}_{\sys{AE}\to\sys{B}}(\rho_\sys{AR}\tensor\sigma_\sys{E})}{\channel{M}_{\sys{AE}\to\sys{B}}(\rho_\sys{AR}\tensor\omega_\sys{E})},
\end{align}
where~\eqref{eq:beta DH relation jammer:2} follows from~\cite[Lemma~31]{fang2024generalized}.
Combining with~\eqref{eq:beta DH relation jammer:1}, we have 
\begin{align}
    \beta^{\sys{E}}_\epsilon\infdiv*{\channel{N}_{\sys{AE}\to\sys{B}}}{\channel{M}_{\sys{AE}\to\sys{B}}}
    = \adjustlimits \inf_{\rho_\sys{AR} \in \density(\hilbert_\sys{A}\tensor\hilbert_\sys{R})} \sup_{\sigma_\sys{E},\,\omega_\sys{E} \in \density(\hilbert_\sys{E})} \beta_{\epsilon}\infdiv*{\channel{N}_{\sys{AE}\to\sys{B}}(\rho_\sys{AR}\tensor\sigma_\sys{E})}{\channel{M}_{\sys{AE}\to\sys{B}}(\rho_\sys{AR}\tensor\omega_\sys{E})}.
\end{align}
We have the asserted result by taking $-\log$ on both sides.
\end{proof}

%*******************************************************************************
Proposition~\ref{prop: beta DH relation jammer} connects the optimal type-II error to the hypothesis testing minimax channel divergence.
Therefore, the error exponent of question can be expressed as 
\begin{equation}
    \lim_{n\to\infty} -\frac{1}{n}\log{\beta^{\sys{E}}_\epsilon\infdiv*{\channel{N}_{\sys{AE}\to\sys{B}}^{\tensor n}}{\channel{M}_{\sys{AE}\to\sys{B}}^{\tensor n}}}
    = \lim_{n\to\infty} \frac{1}{n} \hDiv{\epsilon}^{\minimax}\infdiv*{\channel{N}_{\sys{AE}\to\sys{B}}^{\tensor n}}{\channel{M}_{\sys{AE}\to\sys{B}}^{\tensor n}} = \ ?
\end{equation}
A \emph{weak} asymptotic characterization is shown in the theorem below.
\begin{shaded}
\begin{theorem}(Stein's lemma.)\label{thm:stein's lemma jammer}
For any $\channel{N}\in\CPTP(\sys{AE}:\sys{B})$ and $\channel{M}\in\CP(\sys{AE}:\sys{B})$, 
\begin{align}
    \lim_{\epsilon\to0^+} \lim_{n\to\infty} -\frac{1}{n}\log{\beta^{\sys{E}}_\epsilon\infdiv*{\channel{N}_{\sys{AE}\to\sys{B}}^{\tensor n}}{\channel{M}_{\sys{AE}\to\sys{B}}^{\tensor n}}}
    = \uDiv^{\minimax,\infty}\infdiv*{\channel{N}_{\sys{AE}\to\sys{B}}}{\channel{M}_{\sys{AE}\to\sys{B}}}.
\end{align}
\end{theorem}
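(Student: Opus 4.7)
By Proposition~\ref{prop: beta DH relation jammer}, the operational limit in the theorem equals
\begin{align*}
\lim_{\epsilon\to 0^+}\lim_{n\to\infty}\tfrac{1}{n}\hDiv{\epsilon}^{\minimax}\infdiv*{\channel{N}_{\sys{AE}\to\sys{B}}^{\tensor n}}{\channel{M}_{\sys{AE}\to\sys{B}}^{\tensor n}},
\end{align*}
so the task reduces to sandwiching this quantity between two asymptotic limits, both of which I will then identify with $\uDiv^{\minimax,\infty}$ by invoking the machinery of Section~\ref{sec:divergence}. The lower bound (achievability) will pass through the measured \Renyi minimax channel divergence, while the upper bound (converse) will go directly through the Umegaki one.

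\textbf{Achievability ($\geq$).} For any POVM $(\set{X},M)$ inducing $P_{\rho,\Meas}$ and $P_{\sigma,\Meas}$, the data-processing inequality of $\beta_\epsilon$ yields $\hDiv{\epsilon}\infdiv*{\rho}{\sigma}\geq\hDiv{\epsilon}\infdiv*{P_{\rho,\Meas}}{P_{\sigma,\Meas}}$; composing with the classical version of Fact~\ref{fact: DH petz sandwiched} and taking the supremum over POVMs,
\begin{align*}
\hDiv{\epsilon}\infdiv*{\rho}{\sigma} \geq \mrDiv{\alpha}\infdiv*{\rho}{\sigma} + \tfrac{\alpha}{\alpha-1}\log\tfrac{1}{\epsilon},\qquad \alpha\in(0,1).
\end{align*}
Because the additive term is independent of the states, this inequality lifts transparently to the minimax channel form. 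Applying it to the $n$-fold channels, dividing by $n$, and taking $\liminf_{n\to\infty}$ gives
\begin{align*}
\liminf_{n\to\infty}\tfrac{1}{n}\hDiv{\epsilon}^{\minimax}\infdiv*{\channel{N}_{\sys{AE}\to\sys{B}}^{\tensor n}}{\channel{M}_{\sys{AE}\to\sys{B}}^{\tensor n}} \geq \mrDiv{\alpha}^{\minimax,\infty}\infdiv*{\channel{N}_{\sys{AE}\to\sys{B}}}{\channel{M}_{\sys{AE}\to\sys{B}}},
\end{align*}
where the right-hand side is well-defined by Remark~\ref{rem: lim sup DM alpha jam}. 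Taking $\sup_{\alpha\in(0,1)}$ and applying Lemma~\ref{lem: DM alpha jam infty continuity} followed by Lemma~\ref{lem: DM D jam infty} replaces it with $\uDiv^{\minimax,\infty}$, a bound that holds uniformly in $\epsilon>0$ and is therefore preserved under $\epsilon\to 0^+$.

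\textbf{Converse ($\leq$) and main obstacle.} The upper half of Fact~\ref{fact: DH petz sandwiched} gives $\hDiv{\epsilon}\infdiv*{\rho}{\sigma}\leq\tfrac{1}{1-\epsilon}(\uDiv\infdiv*{\rho}{\sigma}+h(\epsilon))$, which likewise lifts to the minimax form (the scaling factor and constant commute with sup and inf). Dividing by $n$, taking $\limsup_{n\to\infty}$, and then $\epsilon\to 0^+$ sends $\tfrac{1}{1-\epsilon}\to 1$ and $h(\epsilon)\to 0$, producing the matching upper bound $\uDiv^{\minimax,\infty}$. The main technical subtlety is that the two bounds from Fact~\ref{fact: DH petz sandwiched} naturally yield \emph{different} regularized quantities --- $\sup_\alpha\mrDiv{\alpha}^{\minimax,\infty}$ on the achievability side versus $\uDiv^{\minimax,\infty}$ on the converse side --- and these \emph{a priori} need not agree. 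Bridging them is exactly the content of Lemmas~\ref{lem: DM alpha jam infty continuity} and~\ref{lem: DM D jam infty}, and Lemma~\ref{lem: DM D jam infty} itself relies on the symmetry reduction of Lemma~\ref{lem:mimmaxDiv:perm} together with the polynomial eigenvalue bound of Fact~\ref{fact: permutation invariant spec}. It is precisely this chain that makes the measured-to-Umegaki passage survive regularization in the minimax (rather than purely maximized) channel setting, which is where the present setup departs from the worst-case theory of~\cite{fang2025adversarial} and the best-case theory of~\cite{wang2019resource}.
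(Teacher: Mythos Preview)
Your proposal is correct and follows essentially the same route as the paper's own proof: reduce via Proposition~\ref{prop: beta DH relation jammer}, sandwich $\hDiv{\epsilon}^{\minimax}$ between $\mrDiv{\alpha}^{\minimax}$ and $\uDiv^{\minimax}$ using Fact~\ref{fact: DH petz sandwiched}, regularize, and close the gap with Lemmas~\ref{lem: DM alpha jam infty continuity} and~\ref{lem: DM D jam infty}. The only cosmetic differences are that the paper obtains $\hDiv{\epsilon}\geq \mrDiv{\alpha}+\tfrac{\alpha}{\alpha-1}\log\tfrac{1}{\epsilon}$ by combining the Petz bound in Fact~\ref{fact: DH petz sandwiched} with $\mrDiv{\alpha}\leq\pDiv{\alpha}$ (rather than your route through data processing of $\hDiv{\epsilon}$ and the classical bound), and that your explicit use of $\liminf/\limsup$ is slightly more careful than the paper's direct use of $\lim$.
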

\end{shaded}
\begin{proof}
By Proposition~\ref{prop: beta DH relation jammer}, it is equivalent to prove that 
\begin{align}
    \lim_{\epsilon\to0^+}\lim_{n\to\infty} \frac{1}{n} \hDiv{\epsilon}^{\minimax}\infdiv*{\channel{N}_{\sys{AE}\to\sys{B}}^{\tensor n}}{\channel{M}_{\sys{AE}\to\sys{B}}^{\tensor n}}
    = \uDiv^{\minimax,\infty}\infdiv*{\channel{N}_{\sys{AE}\to\sys{B}}}{\channel{M}_{\sys{AE}\to\sys{B}}}.
\end{align}
By the first inequality in Fact~\ref{fact: DH petz sandwiched} (along with the data processing inequality of the Petz-R\'enyi divergence), we have for any $\alpha, \epsilon\in(0,1)$,
\begin{align}
    \mrDiv{\alpha}\infdiv*{\cdot}{\cdot}+ \frac{\alpha}{\alpha-1} \log \frac{1}{\epsilon}
    \leq \hDiv{\epsilon}\infdiv*{\cdot}{\cdot}
    \leq \frac{1}{1-\epsilon}(\uDiv\infdiv*{\cdot}{\cdot} + h(\epsilon)).
\end{align}
Hence, 
\begin{align}
    \mrDiv{\alpha}^{\minimax}\infdiv*{\channel{N}_{\sys{AE}\to\sys{B}}}{\channel{M}_{\sys{AE}\to\sys{B}}}+ \frac{\alpha}{\alpha-1} \log \frac{1}{\epsilon}
    & \leq \hDiv{\epsilon}^{\minimax}\infdiv*{\channel{N}_{\sys{AE}\to\sys{B}}}{\channel{M}_{\sys{AE}\to\sys{B}}}\\
    & \leq \frac{1}{1-\epsilon}\left(\uDiv^\minimax\infdiv*{\channel{N}_{\sys{AE}\to\sys{B}}}{\channel{M}_{\sys{AE}\to\sys{B}}} + h(\epsilon)\right).
\end{align}
Applying the above to $n$-fold channels and dividing by $n$, and taking $n\to \infty$, then $\epsilon \to 0$, 
\begin{align}
    \mrDiv{\alpha}^{\minimax,\infty}\infdiv*{\channel{N}_{\sys{AE}\to\sys{B}}}{\channel{M}_{\sys{AE}\to\sys{B}}} \leq
    \lim_{\epsilon\to0^+}\lim_{n\to\infty} \frac{1}{n} \hDiv{\epsilon}^{\minimax}\infdiv*{\channel{N}_{\sys{AE}\to\sys{B}}^{\tensor n}}{\channel{M}_{\sys{AE}\to\sys{B}}^{\tensor n}}
    \leq \uDiv^{\minimax,\infty}\infdiv*{\channel{N}_{\sys{AE}\to\sys{B}}}{\channel{M}_{\sys{AE}\to\sys{B}}}.
\end{align}
As this holds for any $\alpha \in (0,1)$, we can take the supremum over $\alpha$, and replace the left-most expression by
\begin{align}
    \sup_{\alpha\in(0,1)} \mrDiv{\alpha}^{\minimax,\infty}\infdiv*{\channel{N}_{\sys{AE}\to\sys{B}}}{\channel{M}_{\sys{AE}\to\sys{B}}}
    = \mDiv^{\minimax,\infty}\infdiv*{\channel{N}_{\sys{AE}\to\sys{B}}}{\channel{M}_{\sys{AE}\to\sys{B}}}
    = \uDiv^{\minimax,\infty}\infdiv*{\channel{N}_{\sys{AE}\to\sys{B}}}{\channel{M}_{\sys{AE}\to\sys{B}}}
\end{align}
where the equalities follows from Lemma~\ref{lem: DM alpha jam infty continuity} and~Lemma~\ref{lem: DM D jam infty}, respectively.
Therefore, we have 
\begin{align}
    \uDiv^{\minimax,\infty}\infdiv*{\channel{N}_{\sys{AE}\to\sys{B}}}{\channel{M}_{\sys{AE}\to\sys{B}}} \leq
    \lim_{\epsilon\to0^+}\lim_{n\to\infty} \frac{1}{n} \hDiv{\epsilon}^{\minimax}\infdiv*{\channel{N}_{\sys{AE}\to\sys{B}}^{\tensor n}}{\channel{M}_{\sys{AE}\to\sys{B}}^{\tensor n}}
    \leq \uDiv^{\minimax,\infty}\infdiv*{\channel{N}_{\sys{AE}\to\sys{B}}}{\channel{M}_{\sys{AE}\to\sys{B}}}.
\end{align}
which concludes the asserted result.
\end{proof}
%*******************************************************************************
\begin{remark}
    Note that when $\dim{\hilbert_\sys{E}} = 1$, the above theorem recovers the best-case channel discrimination result in~\cite[Theorem 3]{wang2019resource}. When $\dim{\hilbert_\sys{A}} = 1$, it recovers a weak version~\footnote{The previous result in~\cite[Theorem 1]{fang2025adversarial} has the strong converse property.} of the worst-case (adversarial) channel discrimination result in~\cite[Theorem 1]{fang2025adversarial} under non-adaptive strategies. As the strong converse property (i.e., the convergence without the dependence of $\epsilon$) of the best-case channel discrimination is still open~\cite{fang2025towards}, it is also unclear whether the strong converse property holds in the minimax setting considered here.
\end{remark} 
%*******************************************************************************
%*******************************************************************************

%*******************************************************************************
%*******************************************************************************
\section{Conclusion and Discussion} \label{sec:conclusion}
%*******************************************************************************
In this work, we studied quantum channel discrimination in the presence of jammers, formulating the problem as a competative minimax game between the tester and the adversary, each optimizing their respective strategies.
We established an asymptotic characterization of the optimal type-II error exponent under parallel strategies, showing that it is governed by the minimax channel divergence introduced herein. We also explored fundamental properties of the minimax channel divergence, which can be of independent interest and lay the foundation for future exploration.

\paragraph{Squential strategies.} An interesting direction for future research is the study of sequential strategies, where both the tester and the adversary may adapt their inputs based on the outputs of the channel and its complementary channel, respectively. Previous results have shown that adaptive strategies do not improve the Stein's exponent in either the best-case~\cite{fang2020chain} or worst-case~\cite{fang2025adversarial} channel discrimination scenarios. Whether adaptive strategies can provide an advantage in the competative minimax setting considered here remains an open question.

%*******************************************************************************

%*******************************************************************************
\paragraph{Weaker jammers.}
In the hypothesis testing setup considered here, the jammer’s state $\sigma_\sys{E}$ is chosen independently in~\eqref{eq:def:type-1:error} and~\eqref{eq:def:type-2:error}. 
Operationally, this corresponds to an extreme case that the adversary can tailor its jamming strategy to the underlying hypothesis -- for instance, when the adversary is the one who manufactures the device. 

It is also natural to consider a weaker model in which the jammer is \emph{oblivious} to the hypothesis. 
In this case, the hypothesis testing problem becomes
\begin{align*}
\tilde{\beta}^{\sys{E}}_\epsilon&\infdiv*{\channel{N}_{\sys{AE}\to\sys{B}}}{\channel{M}_{\sys{AE}\to\sys{B}}}
\\
& = 
\inf_{\substack{0\mle M \mle I_\sys{B}\\ \rho_\sys{AR} \in \density(\hilbert_\sys{A}\tensor\hilbert_\sys{R})}}
\sup_{\sigma_\sys{E}\in\density(\hilbert_\sys{E})}
\Big\{ 
    \tr\!\left[ \channel{M}_{\sys{AE}\to\sys{B}}(\rho_\sys{AR}\tensor\sigma_\sys{E}) M \right]
\;\Big|\;
    \tr\!\left[ \channel{N}_{\sys{AE}\to\sys{B}}(\rho_\sys{AR}\tensor\sigma_\sys{E})(I-M) \right] \leq \epsilon 
\Big\}.
\end{align*}

A natural guess of the corresponding divergence is the \emph{minimum channel divergence} (replacing the worst-case channel divergence $\iinfDiv$), defined as
\[
    \infDiv\infdiv*{\channel{N}_{\sys{A}\to\sys{B}}}{\channel{M}_{\sys{A}\to\sys{B}}} 
    \defeq 
    \inf_{\rho_\sys{A} \in \density(\hilbert_\sys{A})} 
    \Div\infdiv*{\channel{N}_{\sys{A}\to\sys{B}}(\rho_\sys{A})}{\channel{M}_{\sys{A}\to\sys{B}}(\rho_\sys{A})}.
\]
Most arguments in Section~\ref{sec:divergence} and Section~\ref{sec:parallel:discrimination} can be adapted in a relatively straightforward manner, with the following three exceptions:
\begin{enumerate}
    \item A new super-additivity property of the minimum channel divergence, analogous to~\cite[Lemma~21]{fang2024generalized}, is required for~\eqref{eq:superadditive:mrDiv:4}.
          Unfortunately, this does not hold in general.
    \item The continuity of the minimum channel divergence, analogous to~\cite[Lemma~22]{fang2024generalized}, is required for~\eqref{eq:DM alpha jam continuity:3}.
          We believe this generalization should be relatively straightforward.
    \item An analogue of~\cite[Lemma~31]{fang2024generalized} is needed for~\eqref{eq:beta DH relation jammer:2}, but this remains unclear at present.
\end{enumerate}
We leave the exploration of weaker jammers and these related questions for future work.

%*******************************************************************************
%*******************************************************************************

\bigskip
\paragraph{Acknowledgements.}
K.F. is supported by the National Natural Science Foundation of China (Grant No.~92470113 and~12404569), the Shenzhen Science and Technology Program (Grant No.~JCYJ20240813113519025), the Shenzhen Fundamental Research Program (Grant
No.~JCYJ2 0241202124023031), the 1+1+1 CUHK-CUHK(SZ)-GDST Joint Collaboration Fund
(Grant No. GRD P2025-022), and the University Development Fund (Grant No. UDF01003565).
M.C. is supported by the European Research Council (ERC Grant Agreement No.~948139) and the Excellence Cluster Matter and Light for Quantum Computing (ML4Q). 

\bibliographystyle{ieeetr}
\bibliography{Bib, extra_reference}

%*******************************************************************************
%*******************************************************************************
\appendix
%*******************************************************************************
%*******************************************************************************
\section{Generalized version for Lemma~\ref{lem:mimmaxDiv:perm}}
\label{app:lem:bipartite:covarian}
%*******************************************************************************
\begin{lemma}\label{lem:bipartite:covariant}
Let $\Div\infdiv{\cdot}{\cdot}$ be a quantum divergence with direct sum property.
Let $\mathcal{N} \in \CPTP(\sys{AE}:\sys{B})$ and $\mathcal{M} \in \CP(\sys{AE}:\sys{B})$.
If they are jointly covariant with respect to $\{\mathcal{U}_A(g)\otimes\tilde{\mathcal{U}}_E(g), \mathcal{V}_B(g)\}_{g\in\mathcal{G}}$ for some group $\mathcal{G}$ where $\{\mathcal{U}_A(g)\}_{g\in\mathcal{G}}$, $\{\tilde{\mathcal{U}}_E(g)\}_{g\in\mathcal{G}}$, and $\{\mathcal{V}_B(g)\}_{g\in\mathcal{G}}$ are unitary representations of $\mathcal{G}$ on system $A$, $E$, and $B$, respectively, then one can restrict the domain of $\sigma$ and $\omega$ from $\density(E)$ to density operators invariant under the action of $\mathcal{G}$ in the following optimization
\begin{equation}
\adjustlimits\inf_{\sigma\,\omega \in \density(E)} \sup_{\rho\in \density(A)} \Div(\mathcal{N}_{AE\to B}(\phi_{AR}^{\rho} \otimes \sigma)\|\mathcal{M}_{AE\to B}(\phi_{AR}^{\rho} \otimes \omega)).
\end{equation}
\end{lemma}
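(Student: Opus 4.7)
The plan is to mimic the strategy used in Lemma~\ref{lem:mimmaxDiv:perm}, replacing the symmetrization over the permutation group by a Haar averaging over $\mathcal{G}$ (which, for the statement to make sense quantitatively, one should assume compact, so that the Haar probability measure $\mu$ exists; the finite-group case reduces to a finite average as in Lemma~\ref{lem:mimmaxDiv:perm}). Throughout, I would write $\phi^{\rho}_{AR}$ for a purification of $\rho_{A}$ with $\hilbert_R=\hilbert_A$ and define the partial objective
\[
f(\sigma,\omega) \defeq \sup_{\rho_A\in\density(A)} \Div\infdiv*{\mathcal{N}_{AE\to B}(\phi^{\rho}_{AR}\tensor\sigma)}{\mathcal{M}_{AE\to B}(\phi^{\rho}_{AR}\tensor\omega)}.
\]

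The first step would be to transfer the joint covariance into a symmetry of $f$. For fixed $g\in\mathcal{G}$ and $\rho$, set $\tilde{\phi}^{\rho}_{AR}\defeq(\mathcal{U}_A(g)\tensor \id_R)(\phi^{\rho}_{AR})$, which is a purification of $\mathcal{U}_A(g)(\rho_A)$. The joint-covariance hypothesis gives
\[
\mathcal{N}_{AE\to B}\bigl(\tilde{\phi}^{\rho}_{AR}\tensor\tilde{\mathcal{U}}_E(g)(\sigma)\bigr) = \mathcal{V}_B(g)\bigl(\mathcal{N}_{AE\to B}(\phi^{\rho}_{AR}\tensor\sigma)\bigr),
\]
and the analogous identity for $\mathcal{M}$. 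Isometric invariance of the divergence then yields equality of the two inner objectives. Since $\rho_A\mapsto\mathcal{U}_A(g)(\rho_A)$ is a bijection of $\density(A)$, taking the supremum over $\rho_A$ on both sides gives $f(\tilde{\mathcal{U}}_E(g)(\sigma),\tilde{\mathcal{U}}_E(g)(\omega))=f(\sigma,\omega)$ for every $g\in\mathcal{G}$.

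The second step would be convexity-based symmetrization. By \cite[Lemma~II.3]{leditzky2018approaches}, the direct-sum property combined with data processing makes the inner objective jointly convex in $(\sigma,\omega)$ for each fixed $\rho$; the supremum over $\rho$ preserves joint convexity, so $f$ is jointly convex on $\density(E)\times\density(E)$. Given any (approximately) optimal pair $(\sigma^{\star},\omega^{\star})$ for the outer infimum, define
\[
\bar{\sigma}\defeq\int_{\mathcal{G}}\tilde{\mathcal{U}}_E(g)(\sigma^{\star})\diff\mu(g), \qquad \bar{\omega}\defeq\int_{\mathcal{G}}\tilde{\mathcal{U}}_E(g)(\omega^{\star})\diff\mu(g),
\]
which are $\mathcal{G}$-invariant density operators by construction. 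Joint convexity of $f$ together with the $\mathcal{G}$-invariance of $f$ established above yields
\[
f(\bar{\sigma},\bar{\omega}) \leq \int_{\mathcal{G}} f(\tilde{\mathcal{U}}_E(g)(\sigma^{\star}),\tilde{\mathcal{U}}_E(g)(\omega^{\star}))\diff\mu(g) = f(\sigma^{\star},\omega^{\star}),
\]
so restricting $\sigma,\omega$ to $\mathcal{G}$-invariant states does not increase the outer infimum, which is the claim.

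The main obstacle is largely bookkeeping: one must be careful that the induced action of $g$ on the purification $\phi^{\rho}_{AR}$ only touches the $A$-register and leaves the reference $R$ untouched, so that the push-forward $\tilde{\phi}^{\rho}_{AR}$ is indeed a purification of $\mathcal{U}_A(g)(\rho_A)$. The only genuine technical caveat is the existence of the averaging measure: for compact $\mathcal{G}$ with measurable representations this is standard (Haar measure), and for finite $\mathcal{G}$ one recovers the explicit group-average used in the proof of Lemma~\ref{lem:mimmaxDiv:perm}; for non-compact $\mathcal{G}$ additional hypotheses (or a weaker conclusion using a closure of the orbit convex hull) would be needed, but this lies outside the scope claimed by the lemma.
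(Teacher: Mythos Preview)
Your proposal is correct and follows essentially the same approach as the paper's proof: establish $\mathcal{G}$-invariance of the inner supremum via covariance and unitary invariance of $\Div$, then use joint convexity in $(\sigma,\omega)$ (from the direct-sum property plus data processing) to replace any minimizing pair by its group average. The only difference is cosmetic: the paper works with a finite group and the counting average $|\mathcal{G}|^{-1}\sum_{g}$, whereas you phrase the symmetrization via the Haar measure for compact $\mathcal{G}$, which is a harmless and natural generalization.
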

%\end{shaded}
\begin{proof}
The idea is that the set of density operators $\density(A)$ is invariant under unitary transformations, and that $\Div$ is also unitary invariant, hence for all $g\in\mathcal{G}$, 
\begin{align}
&\adjustlimits\inf_{\sigma\,\omega \in \density(E)} \sup_{\rho\in \density(A)} \Div(\mathcal{N}_{AE\to B}(\phi_{AR}^{\rho} \otimes \sigma)\|\mathcal{M}_{AE\to B}(\phi_{AR}^{\rho} \otimes \omega))\\
& = \adjustlimits\inf_{\sigma\,\omega \in \density(E)} \sup_{\rho\in \density(A)} \Div(\mathcal{N}_{AE\to B}\circ \mathcal{U}_A^{-1}(g)(\phi_{AR}^{\rho} \otimes \sigma)\|\mathcal{M}_{AE\to B}\circ \mathcal{U}_A^{-1}(g)(\phi_{AR}^{\rho} \otimes \omega))\\
& = \adjustlimits\inf_{\sigma\,\omega \in \density(E)} \sup_{\rho\in \density(A)} \Div(\mathcal{V}_B(g)\circ\mathcal{N}_{AE\to B}\circ \mathcal{U}_A^{-1}(g)(\phi_{AR}^{\rho} \otimes \sigma)\|\mathcal{V}_B(g)\circ\mathcal{M}_{AE\to B}\circ \mathcal{U}_A^{-1}(g)(\phi_{AR}^{\rho} \otimes \omega))\\
& = \adjustlimits\inf_{\sigma\,\omega \in \density(E)} \sup_{\rho\in \density(A)} \Div(\mathcal{N}_{AE\to B}\circ \tilde{\mathcal{U}}_E(g)(\phi_{AR}^{\rho} \otimes \sigma)\|\mathcal{M}_{AE\to B}\circ \tilde{\mathcal{U}}_E(g)(\phi_{AR}^{\rho} \otimes \omega))
\end{align}
Note that $\Div(\cdot\|\cdot)$ is convex (which is implied by data processing inequality and direct sum property), and that $\mathcal{N}$ and $\mathcal{M}$ are linear.
Hence, for any optimizer $(\sigma,\omega)$, one can define a group invariant optimizing pair as
\begin{equation}
(\sigma^\star,\omega^\star) 
=  \left(\sum_{g\in\mathcal{G}} \frac{1}{|\mathcal{G}|} \tilde{\mathcal{U}}_E(g)(\sigma), \sum_{g\in\mathcal{G}} \frac{1}{|\mathcal{G}|} \tilde{\mathcal{U}}_E(g)(\omega)\right).
\end{equation}
Hence, one can restrict the domain for $\sigma$ and $\omega$ to group invariant ones.
\end{proof}
%*******************************************************************************
%*******************************************************************************
\end{document}